\documentclass{amsart}
\usepackage{tabularx,amsmath,amsfonts,amssymb,boxedminipage,graphicx,mathtools,appendix,physics, fullpage, amsthm}
\usepackage{epstopdf, float}
\usepackage{algorithm2e}
\usepackage{subfigure}
\usepackage[final]{hyperref}
\usepackage{csquotes}
\MakeOuterQuote{"}
\RestyleAlgo{boxruled}
\hypersetup{
	colorlinks=true,       
	linkcolor=blue,          
	citecolor=blue,        
	filecolor=magenta,      
	urlcolor=blue         
}
\theoremstyle{plain} 
\newtheorem{theorem}{Theorem}[section]

\theoremstyle{definition} 

\title{Computer Assisted Discovery of Integrability via SILO:  Sparse Identification of Lax Operators}

\author{
    Jimmie Adriazola$^{1,*}$, 
    Wei Zhu$^{2}$, 
    Panayotis G. Kevrekidis$^{3}$,
    Alejandro Aceves$^{4}$
}

\begin{document}

\maketitle

\begin{center}
\small
$^1$ School of Mathematical Sciences and Statistics, Arizona State University \\
$^2$ School of Mathematics, Georgia Institute of Technology\\
$^3$ Department of Mathematics and Statistics, University of Massachussets, Amherst. \\
$^4$ Department of Mathematics, Southern Methodist University. \\
$^*$ Corresponding author, \href{mailto:jimmie.adriazola@asu.edu}{jimmie.adriazola@asu.edu}.
\end{center}

\begin{abstract}
    We formulate the discovery of Lax integrability of Hamiltonian dynamical systems as a symbolic regression problem, which, loosely speaking, seeks to maximize the compatibility between a pair of Lax operators and the known Hamiltonian of the dynamical system. 
    Our approach is first tested on the simple harmonic oscillator. We then move on to the Henon-Heiles system, i.e. a two-degree-of-freedom system of nonlinear oscillators. The integrability of the Henon-Heiles system is critically dependent on a set of three parameters within its Hamiltonian, a fact that we leverage to assess the robustness of our approach in detecting the integrability of this system with respect to the parameter dependence of the Hamiltonian. We then adapt our method to canonical examples of Hamiltonian partial differential equations, including the Korteweg-de Vries and cubic nonlinear Schr\"odinger equations, again testing robustness against nonintegrable perturbations of their respective Hamiltonians. In all examples, our approach reliably confirms or denies the integrability of the equations of interest. Moreover, by appropriately adjusting the loss function and applying thresholded $l^0$ regularization to enforce sparsity in the operator weights, we successfully recover accurate forms of the Lax pairs despite wide initial hypotheses on the operators. Some of the relevant Lax pairs, notably for the Henon-Heiles system and the Korteweg-deVries equation, are distinct from the ones that are typically reported in the literature. The Lax pairs that our methodology discovers warrant further mathematical and computational investigation, and we discuss extensively the opportunities for further improvement of SILO as a viable tool for interpretable exploration of integrable Hamiltonian dynamical systems.
    
\end{abstract}



\section{Introduction}

The theory of integrable dynamical systems occupies a central place in the historical development of mathematical physics~\cite{o2008integrable,ablowitz1981solitons,newell1985solitons,Ablowitz2011a}, further inspiring branches of modern mathematical fields in differential geometry~\cite{guillemin1990symplectic, bolsinov2004integrable}, algebra~\cite{drinfel1988quantum,newell1985solitons}, and functional analysis~\cite{deift1993steepest,zhou1989riemann}. The classical sense of the term integrability refers to the exact solvability of a differential equation up to the calculation of a suitable number of integrals~\cite{lagrange1853mecanique,poincare1967new}. More precisely, according to the Liouville-Arnold theorem, if an $n$ degree of freedom Hamiltonian dynamical system has $n$ algebraically independent integrals of motion, whose level sets are compact, then there exists a local coordinate transformation to a so-called action-angle torus~\cite{arnol2013mathematical}. To resolve the dynamics, one only needs to perform the necessary quadratures in this action-angle coordinate system~\cite{goldstein2002classical}. Moreover, since the dynamics of an integrable system is constrained to take place on such tori, the dynamics cannot exhibit chaos and is thus regular and indeed generically quasiperiodic in its nature~\cite{ott2002chaos}. Of course, there are extended notions of integrability in various different mathematical contexts; see for a recent discussion, e.g.,~\cite{Liu_2022}. Yet, what we have just described is the Liouville sense of integrability that we will often adopt throughout this paper when we refer to the integrability of a dynamical system.

If a Hamiltonian system is integrable, then Hamilton-Jacobi theory~\cite{jose1998classical} or, notably,  inverse scattering transform (IST) methods~\cite{ablowitz1981solitons,newell1985solitons,Ablowitz2011a} may be employed to construct these action-angle coordinates and ultimately solve the mechanics problem. 
The IST amounts effectively to a nonlinear form of the
Fourier transform in which suitable scattering data
are propagated forward in time and eventually the solution
of an inverse problem reconstructs the solution of the PDE
at future times.

However, one does not know a priori if a given Hamiltonian dynamical system is integrable, a feature that
would be highly desirable to recognize. Newton knew that the two-body gravitational problem is integrable; he solved the problem exactly~\cite{goldstein2002classical}. 
Yet, while attacking the three-body problem in the late seventeenth century, Newton, when pressed by his colleague Edmund Halley, asserted that the problem~\cite{westfall1980never} ``made his headache, and kept him awake so often, that he would think of it no more." 
This problem later also challenged Poincar{\'e},
as has been eloquently described, e.g., in~\cite{DiacuHolmes1996}. In the end, he correctly asserted the nonintegrability of the three-body problem. A mathematical and historical perspective of Poincar{\'e}'s work  can be found in \cite{barrow}.

Thus, the efforts of Newton and all those who attempted to solve the three-body problem exactly were always in vain. Notwithstanding the beautiful mathematics that resulted from such efforts, had nonintegrability been confirmed at the time, it would have spared the celestial mechanics community throughout the scientific revolution great pain. Considering that there will always exist a vast number of physical models whose integrability is left to be determined by mathematicians and physicists, a tool for the automated detection of integrability retains its value even in the present day.  

Founded on this principle, our aim in this paper is to develop and demonstrate a broad-scope detector of integrability for systems with a known Hamiltonian.  The key technology we use in the detection of integrability is a symbolic regression~\cite{vladislavleva2020symbolic} over a space of hypothesized Lax pairs~\cite{lax1968integrals}. With this tool that we call the Sparse Identification of Lax Operators (SILO), we expect to catalyze the further future study of Hamiltonian systems where the integrability of the dynamical system is unknown. We note in passing here that there exist substantial efforts
during the past few years towards identifying that Hamiltonian
directly from data~\cite{bertalan2019learning,CHEN2025108563, greydanus2019hamiltonianneuralnetworks,Liu_2022,tothhamiltonian}. Accordingly, the discovery of integrability directly
from data (rather than from equations) is a natural next step along the vein presented herein.

To keep this high-level and introductory discussion simple, let us consider a mechanical system with finite degrees of freedom $n$. Lax pairs in this setting are time-dependent matrices  $L(t)$ and  $P(t)$ in $\mathbb{R}^{n\times n}$ that satisfy the matrix equation
\begin{equation}\label{eq:Lax}
\frac{d L}{d t}=[L, P]
\end{equation}
where $[L,P]=LP-PL$ denotes the matrix commutator. It is a fact that any finite-dimensional $L$, where each component $L_{ij}\in C^1([0,T])$ with $t\in[0,T],$ that satisfies Equation~\eqref{eq:Lax} is isospectral~\cite{lax1968integrals,krishnaswami2020introduction}. 

Now, recall that a Hamiltonian system in canonical coordinates, with a smooth Hamiltonian function $H:\mathbb{R}^{2n}\times[0,T]\to\mathbb{R},$ is given by 
\begin{equation}\label{eq:HamEq}
\begin{gathered}
        \dot{q}_i = \frac{\partial H}{\partial p_i},\qquad\ i=1,...,n,\\
        \dot{p}_i =-\frac{\partial H}{\partial q_i},\qquad\ i=1,...,n.
        \end{gathered}
\end{equation}
If Equations~\eqref{eq:HamEq} further arise as the compatibility conditions of Equation~\eqref{eq:Lax}, then this Hamiltonian dynamical system is said to be  \textit{Lax} integrable, with integrals of motion
that can be constructed from the associated Lax operator $L$~\cite{krishnaswami2020introduction}, assuming that the resulting integrals are in involution. If it can be shown that this Lax pair reproduces all conserved quantities or satisfies certain geometric conditions, then the Lax integrability implies Liouville integrability~\cite{gui1989liouville}.  Nevertheless, the presence of a Lax pair underlying the integrable dynamics enables us to reformulate the detection of integrability in the present work. This will be formulated and accomplished as a symbolic regression problem that seeks to maximize the compatibility [in the precise sense of Equation~\eqref{eq:pointwise_relation} discussed below] between a hypothesized library of Lax pairs and a known Hamiltonian of the dynamics. This is the essence of our work.

We are aware of two previous attempts that use Lax pairs for numerical detection of integrability, as well as a third, different approach towards algorithmic discovery of integrability. The first is due to Krippendorf, et al.~\cite{krippendorf2021integrability}, where neural networks are employed to find parametrized Lax pairs that are as compatible with the known Hamiltonian dynamical system as possible. Although this work was successful in its approach and, indeed, paved the way toward such data-driven methodologies for detecting integrability, its effectiveness was somewhat limited. Due to the use of neural networks, the numerical precision and interpretability of their results
could, arguably, be further improved. Additionally, the sampling scheme used for their PDE problems assumes that one knows the soliton solution of the model.  Koster and Wahls took a different, more sophisticated data-driven approach to construct the spectral operator for members of the Ablowitz-Kaup-Newell-Segur (AKNS) hierarchy, even discovering the governing equations from noisy data~\cite{kw}. However, this approach is inherently restricted, as it is only applied to AKNS systems
(and thus uses rather limited libraries of functions).  Moreover, this approach uses an error function that requires a priori
knowledge (and necessitates the subsequent in time
preservation) of the conserved quantities of the system.

A third, very recent approach, developed by Kantamneni, Liu, and Tegmark~\cite{kantamneni2024optpde}, attempts to search over the space of possible PDEs to maximize the number of conserved quantities. However, this approach is limited in that it can only discover a few conserved quantities at most
(instead of an algorithmic sequence of infinitely many such quantities). Thus, it subsequently relies on human judgment to determine whether or not the hierarchy of conserved quantities can be continued onward towards an infinite number of such (which, moreover,
need to be in involution).

We find that reinterpreting the problem of detecting integrability via Lax pairs as a sparse regression problem yields highly promising results.
First, this allows us to obtain high-precision detection of integrability against non-integrable Hamiltonian perturbations. Second, by using sparse regression techniques, we are able to rediscover accurate structural forms of the Lax pairs in numerous examples.
These include ones that are accurate even though they were not ----and
in some cases, to the best of our knowledge, are not------ the
standard ones used in the literature. We note that the idea of searching over a wide library of hypothesized operators while maintaining a sparsification is similar in spirit to the Sparse Indentification of Nonlinear Dynamics (SINDy) methodology that discovers governing equations from data~\cite{bpk} . A major departure from SINDy here is largely in the execution of the computations, since discovery of sparse Lax pairs from data involves an entirely different loss function than what is used in equation discovery.

Consequently, our work demonstrates that SILO advances us in the direction of building general, robust, interpretable, and reliable tools for precise mathematical discoveries in the context of integrable Hamiltonian systems. We also aim to be as transparent as possible about our methodology in that we discuss at length the limitations, computational choices, and many of the salient implementation details when using SILO. We reiterate here what we consider to be a major advantage of framing integrability detection via a sparse regression on Lax pairs, as opposed to a numerical search for models with a high, yet finite number of conserved quantities. 
That is, once Lax integrability is established,  work by Gui-zhang~\cite{gui1989liouville} demonstrates that one can, in principle, elevate this to Liouville integrability by checking certain geometric conditions on the discovered Lax pair. Thus, the discovery of \textit{Liouville} integrability from Lax integrability remains, in principle, automated with the use of computer algebra systems. Such an examination goes beyond the scope of our paper, but we make note of this here to communicate that such studies can be undertaken if desired.

The paper is organized as follows. In Section~\ref{section:SHO}, we build and motivate the loss function, used in the regression, for a system with one degree of freedom. This loss function, as we will show, scales to scenarios with many degrees of freedom. In this section, we also take the opportunity to detail our sparsification strategy, employed throughout this work, as much as possible. Since autonomous Hamltonian systems with one degree of freedom are always integrable, we proceed to study a two-degree-of-freedom system, the Henon-Heiles model, in Section~\ref{section:HH}. Because the integrability of the Henon-Heiles model depends on parameters defining its Hamiltonian, this case allows us to assess the accuracy of our method for detecting integrability. In Section~\ref{section:KdV} and Section~\ref{section:NLS}, we extend and test our method for Hamiltonian nonlinear partial differential equations (PDEs) including the Korteweg-de Vries equation and the nonlinear Schr\"odinger equation, respectively. In all sections, we employ a sparsification technique for the regression to facilitate the interpretation of the Lax pair discovered by our numerics. We also comment on the relevant findings in comparison
to the ``standard'' Lax pair settings that exist in the corresponding theoretical literature.

\section{Warm Up Problem: The Simple Harmonic Oscillator}\label{section:SHO}

In this section, we introduce the basic methodology we use to detect integrability for Hamiltonian systems and interpret the resulting Lax pairs. As an instructive warm-up, we first develop the method for the simple harmonic oscillator. We find that with appropriate adjustments, the techniques discussed in this section can be scaled to the discovery of Lax Pairs for Hamiltonian PDEs, i.e. systems with
infinitely many degrees of freedom. We emphasize again that while certain aspects of our methodology share similarities with those of Krippendorf et al.~\cite{krippendorf2021integrability}, our method differs significantly by adopting a more fundamental interpretation of the optimization problem formulated. 
Specifically, the approach presented here uses
an expanded basis and a more robust sampling approach compared to the earlier work
of~\cite{krippendorf2021integrability}. Moreover,  the subsequent sparsification process leads to the interpretable Lax pairs identified
throughout this work. Furthermore, our work only uses
the known Hamiltonian and is, in principle, agnostic to knowledge of conserved quantities or mathematical form of the Lax pair, such as what is assumed in~\cite{kw}.

To begin, recall that the dynamics of the simple harmonic oscillator are given by the following system,
\begin{equation}\label{eq:SHO}
\begin{gathered}
        \dot{q} = \frac{\partial H}{\partial p}=p/m ,\\
        \dot{p} =-\frac{\partial H}{\partial q}= - k q ,
\end{gathered}
\end{equation}
where the Hamiltonian $H=\left(p^2/m+m \omega^2 q^2\right)/2$ and the spring constant $k$ is defined in terms of the oscillator's mass $m>0$ and angular frequency $\omega\in\mathbb{R}$, namely $k=m \omega^2$.

A Lax pair reported by the literature for this system is given by~\cite{krishnaswami2020introduction}
\begin{equation}\label{eq:SHOLaxPair}
L=\left(\begin{array}{cc}
p / m & \omega q \\
\omega q & -p / m
\end{array}\right) \quad \text { and } \quad P=\left(\begin{array}{cc}
0 & \omega / 2 \\
-\omega / 2 & 0
\end{array}\right).
\end{equation}
It is important to note that the Lax pair~\eqref{eq:SHOLaxPair} corresponding to Equation~\eqref{eq:SHO} is not unique; indeed, Krippendorf, et al.~\cite{krippendorf2021integrability} correctly point out that there exist two two-parameter families of Lax pairs
\begin{subequations}
\begin{equation}\label{eq:FirstSHOLax}
 L_1=a\left(\begin{array}{cc}
p/m & b \omega q \\
\omega q/b & -p/m
\end{array}\right), P_1=\left(\begin{array}{cc}
0 & b\omega/2 \\
-\omega/2b & 0
\end{array}\right),
\end{equation}
\begin{equation}\label{eq:SecondSHOLax}
 L_2=a\left(\begin{array}{cc}
q & p/b\omega \\
bp/\omega & -q
\end{array}\right), P_2=\left(\begin{array}{cc}
0 & -\omega/2b \\
2\omega/b & 0
\end{array}\right),
\end{equation}
\end{subequations}
where $a,b\in\mathbb{R}/\{0\}.$ 
Also note that the only conserved quantity, the energy of the system $E$, is related to the eigenvalues of the operator $L$ appearing in~\eqref{eq:SHOLaxPair} and can be written as
\begin{equation}
	E=\frac{1}{2}\left(\frac{p^2}{m}+m \omega^2 q^2\right)=-\frac{m}{2} \operatorname{det} L=\frac{m}{4} \operatorname{tr} L^2.
\end{equation}

Already at this elementary stage, the nonuniqueness of the Lax pairs prevents us from ever making the claim that we can reconstruct \textit{the} Lax pair for a Hamiltonian dynamical system. Nevertheless, we will attempt to construct one such pair. We hypothesize the operators $\tilde{L}$ and $\tilde{P}$, satisfying Lax's equation, that we would like construct, are of the form
\begin{equation*}
\begin{gathered}
        \tilde{L}(t) = \sum_{k=1}^{N_{\xi}} \xi_k \Theta_k^{(L)} (t),\\
    \tilde{P}(t) = \sum_{k=1}^{N_{\zeta}} \zeta_k \Theta_k^{(P)}(t),
    \end{gathered}
    \end{equation*}
where the symbol $\Theta$ represents a user-specified library of matrices with parameters $\xi\in\mathbb{R}^{N_{\xi}}$ and $\zeta\in\mathbb{R}^{N_{\zeta}}$. In the case of the simple harmonic oscillator, we hypothesize that the $2\times 2$ matrix operators contain terms that are at most linear in both $q$ and $p$. This gives $N_{\xi}=N_{\zeta}=2\times2\times3=12.$ We make no other assumptions on the form of the operators $\tilde{L}$ and $\tilde{P}$. Note that despite limiting each entry to a linear polynomial, we already have 24 parameters $\eta:=[\xi\ \zeta]$ to search for.

It is obvious that the size of this parameter space $N_{\eta}$ is much too large; in fact,  by at least a factor of 6.  Indeed, if we make the following assumptions
\begin{itemize}
	\item The operator $\tilde{L}$ does not contain any constant terms,
	\item The operator $\tilde{L}$ is a symmetric matrix with zero trace,
	\item The operator $\tilde{P}$ is skew-symmetric,
\end{itemize}
then there are only 4 parameters to find. To show this, we write the hypotheses explicitly:
$$
\tilde{L}= 	\begin{pmatrix}
	\xi_1 q + \xi_3 p  & \xi_2 q +\xi_4 p \\
	\xi_2 q +\xi_4 p  & -\xi_1 q - \xi_3 p
\end{pmatrix},
$$
and
$$
\tilde{P}= 	\begin{pmatrix}
	0  & \zeta_1 x +\zeta_2 p +\zeta_3\\
	-\zeta_1 q -\zeta_2 p-\zeta_3  & 0
\end{pmatrix}.
$$
With this hypothesis, the governing equations are given by
\begin{equation}\label{eq:SmallSHOEq}
\begin{gathered}
\dot{q}=2\zeta_3\left(\frac{\xi_2}{\xi_4}-\frac{\xi_1}{\xi_3}\right)^{-1}\left(\left(\frac{\xi_1}{\xi_4}+\frac{\xi_2}{\xi_3}\right)q+\left(\frac{\xi_3}{\xi_4}+\frac{\xi_4}{\xi_3}\right)p\right), \\
\dot{p}=2\zeta_3\left(\frac{\xi_4}{\xi_2}-\frac{\xi_3}{\xi_1}\right)^{-1}\left(\left(\frac{\xi_1}{\xi_2}+\frac{\xi_2}{\xi_1}\right)q+\left(\frac{\xi_3}{\xi_2}+\frac{\xi_4}{\xi_1}\right)p\right),
\end{gathered}
\end{equation}
while
\begin{equation}\label{eq:SmallSHOTrace}
\operatorname{tr}(L^2)=2((\xi_1^2+\xi_2^2)q^2+(\xi_3^2+\xi_4^2)p^2+(\xi_1\xi_3+\xi_2\xi_4)qp).
\end{equation}
We see that any combination of coefficients satisfying $(\xi_1\xi_3+\xi_2\xi_4)=0$ eliminates the undesirable cross-terms appearing in Equation~\eqref{eq:SmallSHOTrace} and the equations of motion~\eqref{eq:SmallSHOEq}. Moreover, the parameters $\zeta_1$ and $\zeta_2$ are redundant since they do not appear anywhere in the dynamics. These 5 remaining parameters along with the condition $(\xi_1\xi_3+\xi_2\xi_4)=0$ leave 4 total parameters. 

Of course, having the hindsight to make such a judicious choice of assumptions on the Lax pairs is a luxury. When no user-specified knowledge is available, it is standard in data-driven science and statistics to sparsify the problem using some variant of the $l^p\left(\mathbb{R}^{N_{\eta}}\right)$ norm~\cite{bpk,hastie2009elements,hastie2015statistical}; an approach that we adopt in this work and will discuss in greater depth shortly.

We now motivate the loss function. The Lax equation includes a time derivative of the operator $L$. Numerically working with time derivatives naturally introduces a time discretization, leading to numerical errors. To circumvent this issue, we observe via the chain rule that
\begin{equation*}
	\begin{aligned}
		\frac{d L}{d t} & =\frac{\partial L}{\partial q} \dot{q}+\frac{\partial L}{\partial p} \dot{p} \\
		& =\frac{\partial L}{\partial q} \frac{\partial H}{\partial p}-\frac{\partial L}{\partial p} \frac{\partial H}{\partial q} \\
		& :=\{L, H\},
	\end{aligned}
\end{equation*}
where the last line defines the Poisson bracket of the operator $L$ with the Hamiltonian $H$. Evaluating this Poisson bracket $\{L,H\}$ is straightforward to do analytically by hand. Also, since $\frac{d L}{d t} =[L,P],$ if follows that
\begin{align}
\label{eq:pointwise_relation}
    \{L, H\}-[L,P]=0    
\end{align}
should hold for any $q,p\in\mathbb{R}$. However, this last equation admits the trivial solution $L\equiv0$, so care should be taken to penalize away from $\xi\equiv0$.

With these pieces in hand, we formulate the sparse identification of a Lax pair associated with a given Hamiltonian dynamical system as the following optimization problem
\begin{equation}\label{eq:SHOProb}
\min _{\eta\in\mathbb{R}^{N_{\eta}}}J[\eta]=\min _{\eta\in\mathbb{R}^{N_{\eta}}} \frac{1-r}{N_{\rm samples}}\sum_{k=1}^{N_{\rm samples}}\left[\sum_{i, j} \frac{(\{\tilde{L}, H\}-[\tilde{L}, \tilde{P}])_{i, j}^2}{\{\tilde{L}, H\}^2_{i,j}}\right]_\Omega+r\mathcal{R}(\eta),
\end{equation}
where $r\in [0, 1)$ controls the amount of desired sparsification in the search. The first contribution to the loss function $J$ rewards the discovery of the pointwise relationship~\eqref{eq:pointwise_relation}, in phase space, the Lax pair needs to satisfy. Since the numerator is divided by the Poisson bracket that acts on $\tilde{L}$, this formulation discourages trivial solutions where $\tilde{L}\equiv0$. Since it is infeasible to access the continuous nature of the phase space, we restrict the evaluation to the set $\Omega$, a subset of the phase space $\mathbb{R}^2,$ and $N_{\rm samples}$ is the number of times we sample from $\Omega$. The sum over $i,j$ ensures that we consider every component of the $2\times2$ matrices. In summary, we search for a pair of matrices that on average over some subset of the phase space minimizes the relative residual of the Lax equation, compatible with the Hamiltonian of the system, while maintaining parsimony through the sparsification function $\mathcal{R}$.

Typical choices for the sparsification function include $\mathcal{R}_0=\|{\eta}\|_{l^0\left(\mathbb{R}^{N_{\eta}}\right)}$ or $\mathcal{R}_1=\|{\eta}\|_{l^1\left(\mathbb{R}^{N_{\eta}}\right)}$. 

We find that neither of these choices works quite as well as we could hope. After several numerical experiments, we found that
\begin{equation}\label{eq:bestR}
    \mathcal{R}^*(\eta):=\frac{\|{\tilde{\eta}}\|_{l^0\left(\mathbb{R}^{N_{\eta}}\right)}}{\|{\eta}\|_{l^0\left(\mathbb{R}^{N_{\eta}}\right)}}
\end{equation}
consistently yielded the best results. Here, $\|\tilde{\eta}\|_{l^0\left(\mathbb{R}^{N_{\eta}}\right)}$ is the number of parameters that have an absolute value greater than some user-specified tolerance $\tau$. In short, the regularization $\mathcal{R}^*(\eta)$ counts the proportion of components that exist outside of the threshold level $\tau$. Thus, we always have $0<\mathcal{R}^*<1$ independent of the size $N^{\eta}$ of the search space.

Especially for the larger-dimensional problems that we tackle later in this work, the optimization problem equipped with $\mathcal{R}^*(\eta)$ is not enough to reconstruct a sparse set of Lax operators. The steps we take to perform sparsification are precisely as follows:
\begin{itemize}
	\item [1)] Solve the optimization problem~\eqref{eq:SHOProb} with some fixed $r>0$ and threshold level $\tau>0$.
	\item [2)] Solve the problem again over the subspace of nonzero parameters, using a line search method with the initial iterate set to the output of Step 1, but with no sparsification and no thresholding. That is, set $r=\tau=0.$
	\item [3)] Solve the problem yet again with $r=0$ using the output of step 2 as the initial guess for a line search, but introduce a smaller threshold level to remove any small components of $\eta$ that may still remain.
\end{itemize}

The rationale behind this approach is as follows. Having fixed the parameter $r$, Step 1 executes the sparse search without any guesswork. We used a 24-core machine to execute the search on 24 thresholding values $\tau$ in parallel. In principle, this only costs us the real-time cost of one search. In Step 2, we remove all bias that sparsification introduces into the problem by setting $r=\tau=0$. However, when we remove the threshold, this introduces the likelihood that thresholded components may be resurrected. To avoid resurrection and to reintroduce smoothness into the problem, we remove thresholding and ensure that the search is executed only in the subspace of nonzero parameter values that survived from Step 1. Now, since the search performed in Step 2 may have some remaining components that are small but nonzero, we reintroduce a threshold of $\tau/10$ for each value of $\tau$ and execute a last parallel search maintaining $r=0$ so that the sparsification bias is kept to a minimum. This strategy may not be the most elegant, but it is algorithmic and has proven successful in all the examples of this work. 

Some further comments on sparsification are warranted. For all the numerical examples in this work, we fix $r=1/2$. Varying this parameter, of course, interpolates between the two clear extremes of the optimization problem given by Equation~\eqref{eq:SHOProb}. We found, through numerical experimentation, that fixing $r$ and varying $\tau$ are sufficient for our purposes. Additionally, the more commonly used alternatives to $\mathcal{R}^*,$ such as the previously discussed $\mathcal{R}_0$ and $\mathcal{R}_1$ sparsification functions, could perhaps be made viable. One could also imagine that using $\mathcal{R}_p,$ $0<p<1$, may also be deployable. When testing these different strategies, we found that for $\mathcal{R}_0$ to be reliable, we required different values of $r$ to be used in the different examples in this paper, while $\mathcal{R}_1$ overrelaxed sparsification, and we could not discover the sparsest possible Lax pairs this way. Our sparsification strategy, as will be demonstrated throughout this work, achieves the desired result of finding sparse Lax pairs as best we could hope. For brevity, we omit detailed comparisons among different $\mathcal{R}_p,$ $0<p<1,$ and let the numerical results speak for themselves.

Returning to the simple harmonic oscillator for a numerical demonstration, we set $k=5$ and $ m=2$---chosen arbitrarily for this proof of concept. The phase space is sampled using $N_{\rm samples}=100$ points from a uniform distribution over $[-1,1]\times[-1,1].$ To numerically solve the optimization problem~\eqref{eq:SHOProb}, we use a non-convex search method called the cross-entropy method~\cite{rubinstein2005stochastic}, implemented in the CEopt MATLAB package~\cite{cunha2024ceopt}. We then feed the result of cross-entropy minimization into a BFGS method~\cite{nocedal1999numerical}, as implemented by MATLAB's \texttt{fminunc} function, which refines the result and quickly converges to the nearest local minimum. In this way, this hybrid optimization approach overcomes the non-convexity of the optimization landscape while remaining computationally efficient. Without any sparsification, we reliably minimize the loss function to a value on the order of $10^{-15}.$

For sparsification, we performed our parallel search over the 24 uniformly spaced values of $\tau\in[0.1,2.4]$. Despite the bias in the training, we are still able to achieve a loss on the order of $10^{-15}$ (evaluated when $r=0$) and correctly identify the exact structure of the known Lax pairs given by Equations~\eqref{eq:FirstSHOLax} and~\eqref{eq:SecondSHOLax}, that is, correctly identify the six parameters needed in the 24 parameter set $\eta$. The reason why 6 parameters should be identified, instead of the 4 as argued previously, is due to the free parameters $a$ and $b$ present in Equations~\eqref{eq:FirstSHOLax} and~\eqref{eq:SecondSHOLax}. 

Precisely, the algorithm discovers
$$
\tilde{L}_1= 	\begin{pmatrix}
	\eta_1 q  &\ \ \eta_4 p \\
	\eta_6 p  &\eta_7 q
\end{pmatrix},\qquad
\tilde{L}_2= 	\begin{pmatrix}
	\eta_2 p  &\ \ \eta_3 q \\
	\eta_5 q  &\eta_8 p
\end{pmatrix}
$$
and
$$
\tilde{P}= 	\begin{pmatrix}
	0  & \eta_{22}\\
	\eta_{23} & 0
\end{pmatrix},
$$
with all other 18 entries of $\eta$ equal to zero. The recovered Lax pairs reproduce the equations of motion with seven-digit precision. At this stage, it is quite encouraging that our methodology correctly identifies \textit{both} known Lax pairs. Figure~\ref{fig:SparseSHO} shows two examples of the values of $\eta$ before and after thresholding. We report that our algorithm did not find any other family of Lax pairs for the simple harmonic oscillator. We therefore report with confidence that no other such family of Lax pairs exists within the operator hypothesis used for this study. 
\begin{figure}[htbp]
	\begin{centering}
{\includegraphics[width=.45\textwidth]{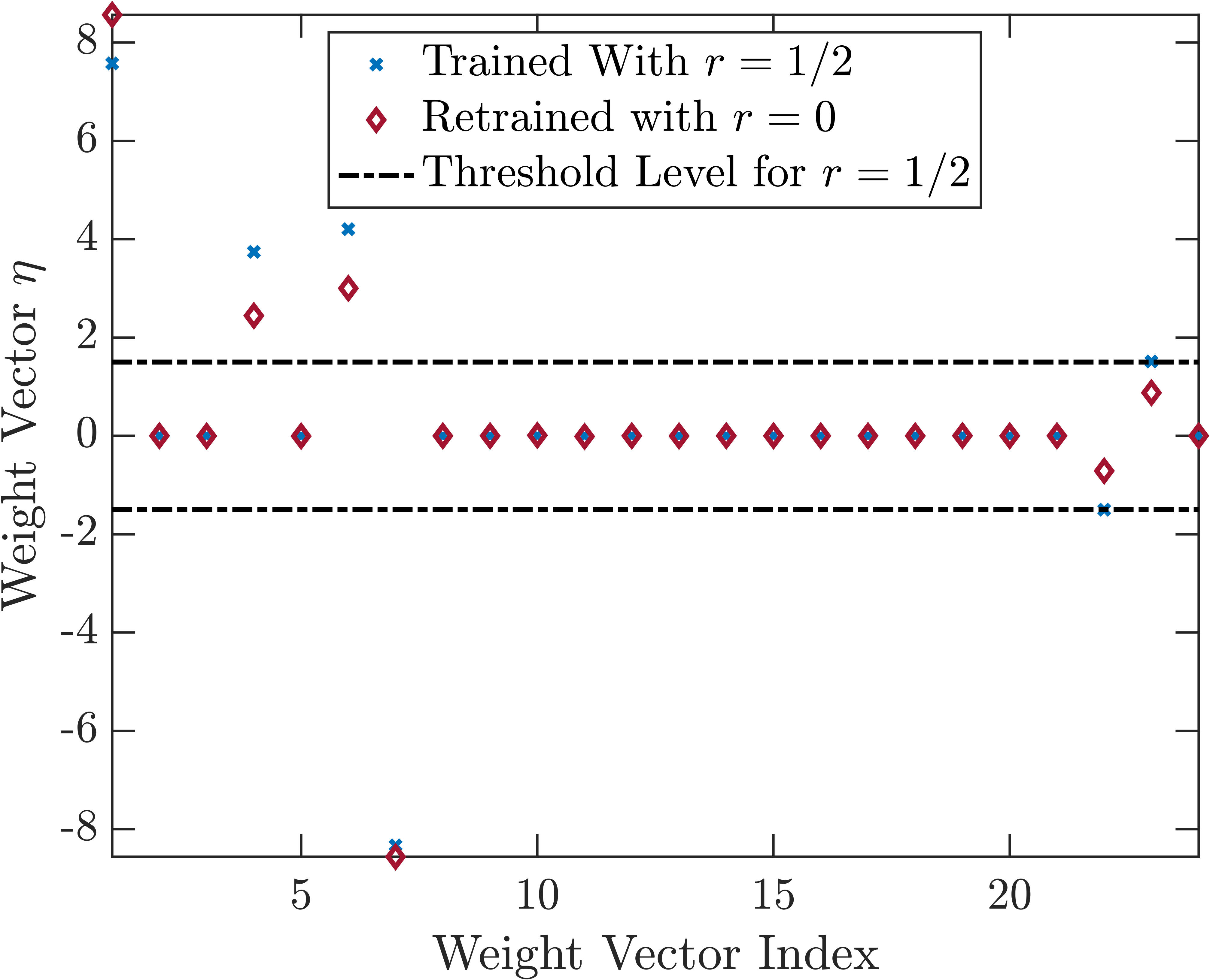}}
{\includegraphics[width=.45\textwidth]{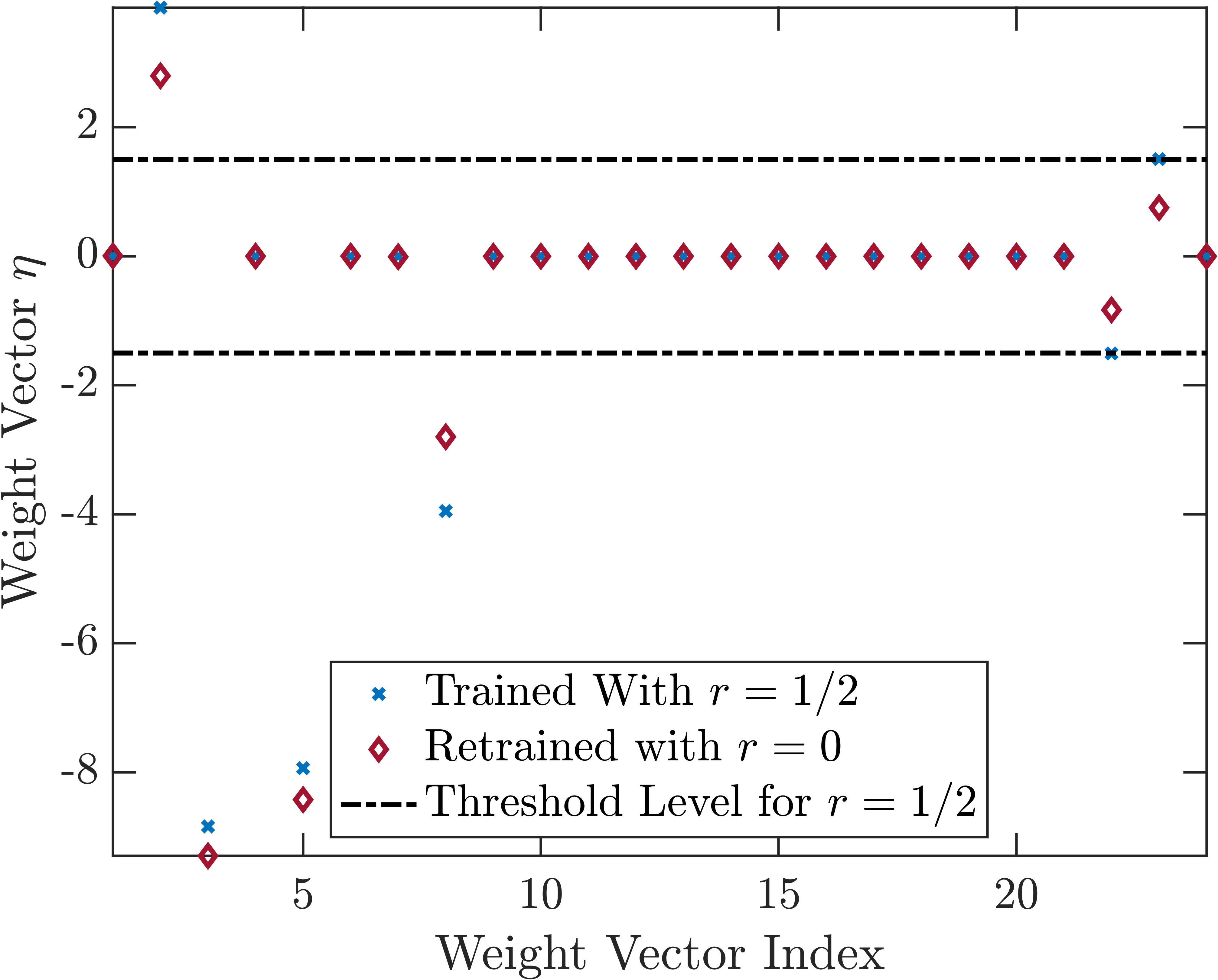}}
	\end{centering}
	\caption{Two numerical results from studying Problem~\eqref{eq:SHOProb} for the simple harmonic oscillator~\eqref{eq:SHO}. We show the result of optimizing before and after thresholding. SILO correctly identifies that there should only be 6 parameters in the 24 parameter operator hypothesis. SILO also finds both families of Lax pairs as given by Equations~\eqref{eq:FirstSHOLax} and~\eqref{eq:SecondSHOLax}.}\label{fig:SparseSHO}
\end{figure}

Thus, integrability detection and the recovery of a computationally meaningful Lax pair for the simple harmonic oscillator is achieved. But, of course, one-degree-of-freedom autonomous Hamiltonian systems are always integrable because the Hamiltonian itself is conserved. Moreover, a Lax pair can always be identified with the dynamics since for any Hamiltonian of the form
$H=\frac{1}{2}p^2+V(q)$, with $V(q)\in C^1(\mathbb{R}),$ the compatibility of the operators
\begin{equation*}
\begin{gathered}
L=\left(\begin{array}{cc}
p & \sqrt{2V(q)} \\
\sqrt{2V(q)} & -p
\end{array}\right) \\
P=\left(\begin{array}{cc}
0 & \partial_q \sqrt{2V(q)} \\
-\partial_q \sqrt{2V(q)} & 0
\end{array}\right) \\
\end{gathered}
\end{equation*}
gives rise to Hamilton's equations of motion $\dot{q}=\frac{\partial H}{\partial p},\dot{p}=-\frac{\partial H}{\partial q}.$ For this reason, we move on to a two-degree-of-freedom system, carrying forward the lessons learned from this test case.

\section{A Two Degree of Freedom Scenario: Henon-Heiles}\label{section:HH}

Now with a viable approach in hand, we go on to study a slightly more complicated example: the famous Henon-Heiles (HH) system. This is a system with two degrees of freedom defined by the Hamiltonian
\begin{equation}\label{eq:HHHam}
\begin{aligned}
H=\frac{1}{2}\left(p_x^2+p_y^2\right) & +\frac{1}{2}\left(A x^2+B y^2\right) +x^2 y+\varepsilon y^3,
\end{aligned}
\end{equation}
where the parameters $A,\ B,$ and $\varepsilon$ are arbitrary~\cite{henon1964applicability,fordy1991henon}. This system is known to be integrable for three different cases of these parameters~\cite{ravoson1993separability}. 

We will focus on one such case; $A=B$ and $\varepsilon=1/3$. In this case, a Lax pair for the HH system can be written as~\cite{ravoson1993separability}
\begin{equation}
L=\left(\begin{array}{cc}
L_1 & 0 \\
0 & L_2
\end{array}\right), \quad P=\left(\begin{array}{cc}
P_1 & 0 \\
0 & P_2
\end{array}\right)
\end{equation}
where
\begin{equation*}
\begin{aligned}
& L_1=\left(\begin{array}{cc}
p_y-p_x & y-x-k \\
\frac{2 k^2+k[3 A+2(y-x)]+3 A(y-x)+2(y-x)^2}{12} & p_x-p_y
\end{array}\right), \\
& P_1=\left(\begin{array}{cc}
0 & 1 \\
-\frac{y-x}{3}-\frac{3 A+2 k}{12} & 0
\end{array}\right), \\
& L_2=\left(\begin{array}{cc}
p_y+p_x & y+x-k \\
\frac{2 k^2+k[3 A+2(y+x)]+3 A(y+x)+2(y+x)^2}{12} & -p_x-p_y
\end{array}\right), \\
& P_2=\left(\begin{array}{cc}
0 & 1 \\
-\frac{y+x}{3}-\frac{3 A+2 k}{12} & 0
\end{array}\right),
&
\end{aligned}
\end{equation*}
and $k$ is a free parameter. 

We are interested in testing our methodology as a proof of concept. For this reason, we use a mix of basic assumptions about the structure of the Lax pair along with some sparsification. Specifically, we exploit the block-diagonal structure of the Lax pairs and inform our search so that we only seek to achieve a small loss over one of the blocks. Our hypothesis for the Lax operators is 
\begin{equation}
\tilde{L}=\left(\begin{array}{cc}
\xi_1 p_x+\xi_2 p_y & \xi_8 x+\xi_9 y+\xi_{10} x y+\xi_{11}x^2+\xi_{12}y^2 \\
\xi_3 x+\xi_4 y+\xi_5 x y+\xi_6 x^2+\xi_7 y^2 &-\xi_1 p_x-\xi_2 p_y
\end{array}\right)
\end{equation}
and 
\begin{equation}
\tilde{P}=\left(\begin{array}{cc}
0 & \zeta_4+\zeta_5 x+\zeta_6 y \\
\zeta_1+\zeta_2 x+\zeta_3 y & 0
\end{array}\right)
\end{equation}

Note that the most general quadratic hypothesis in all variables and without assuming a block diagonal structure would result in an optimization dimension of $N_{\eta}=288$. Although this remains computationally feasible, such a study would obfuscate the essence of how our methodology performs. Therefore, we instead study this much smaller problem of dimension $N_{\eta}=18$, leaving higher-dimensional studies for future work.

Only two tasks remain. First, we need the Poisson bracket $\{L,H\}$ again,
\begin{equation}
	\begin{aligned}
		\frac{d L}{d t} & =\frac{\partial L}{\partial x} \dot{x}+\frac{\partial L}{\partial y} \dot{y}+\frac{\partial L}{\partial p_x} \dot{p}_x+\frac{\partial L}{\partial p_y} \dot{p}_y \\
		& =\frac{\partial L}{\partial x} \frac{\partial H}{\partial p_x}+\frac{\partial L}{\partial y} \frac{\partial H}{\partial p_y}-\frac{\partial L}{\partial p_x} \frac{\partial H}{\partial x}-\frac{\partial L}{\partial p_y} \frac{\partial H}{\partial y} \\
		&=\{L,H\}
	\end{aligned}
\end{equation}
which is straightforward to evaluate by hand. The other is to generalize the sampling of the phase space to four dimensions, that is, $(x_k,p_{x_k},y_k,p_{y_k})\sim U([-1,1]^4)$. 

We train again on $N_{\rm samples}=100$ samples from the phase space. As illustrated in Figure~\ref{fig:SparseHH}, the optimization correctly identifies that only 13 parameters are necessary in the Lax pair. The thresholds, difficult to distinguish from zero in that scaling of the figure, were found to be on the order of $10^{-3}$ after some experimentation. We believe that such small threshold levels were needed because so much sparsification had been performed manually through the choice of our operator hypothesis. 
\begin{figure}[htbp]
	\begin{centering}		\subfigure{\includegraphics[width=.45\textwidth]{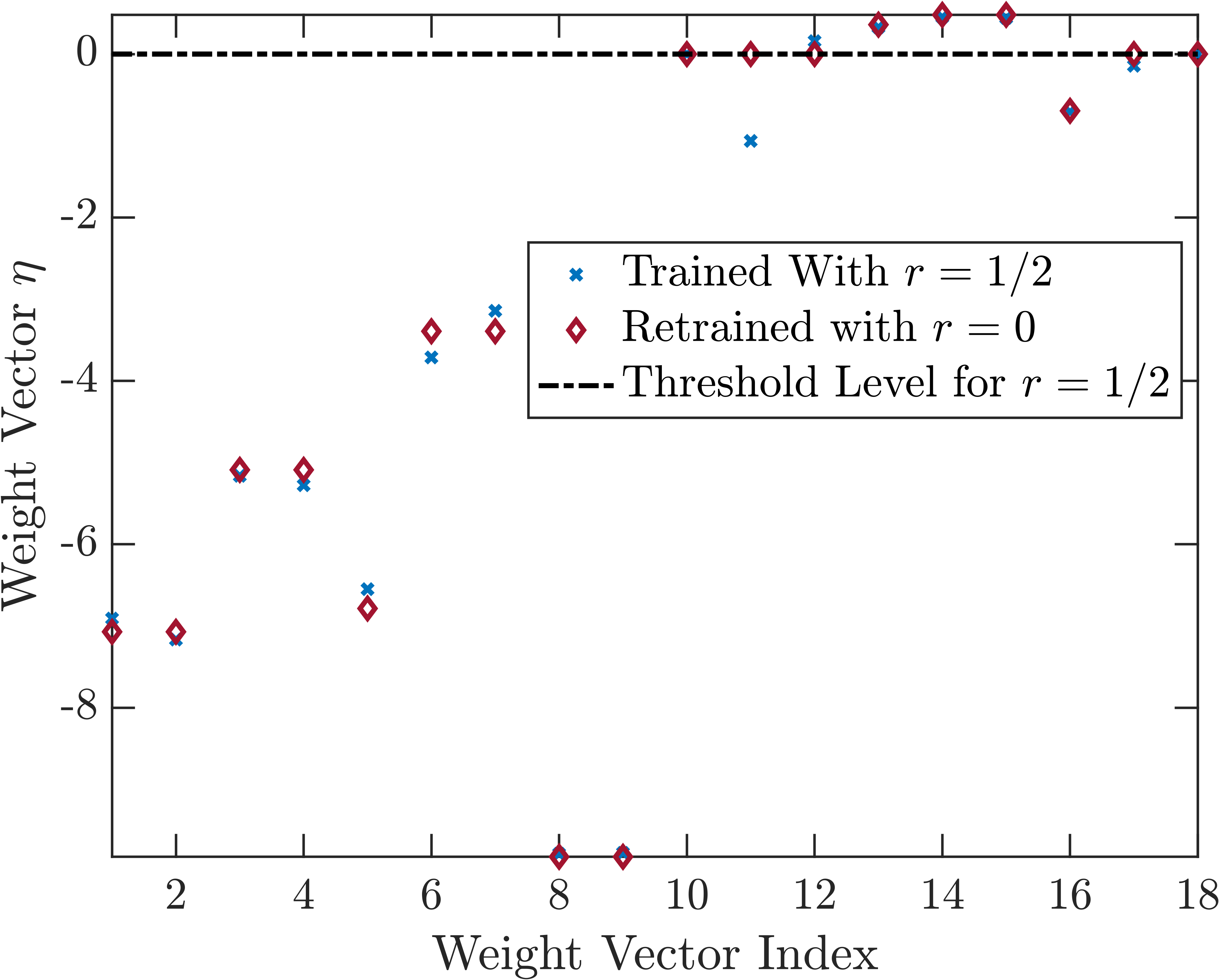}}
\subfigure{\includegraphics[width=.45\textwidth]{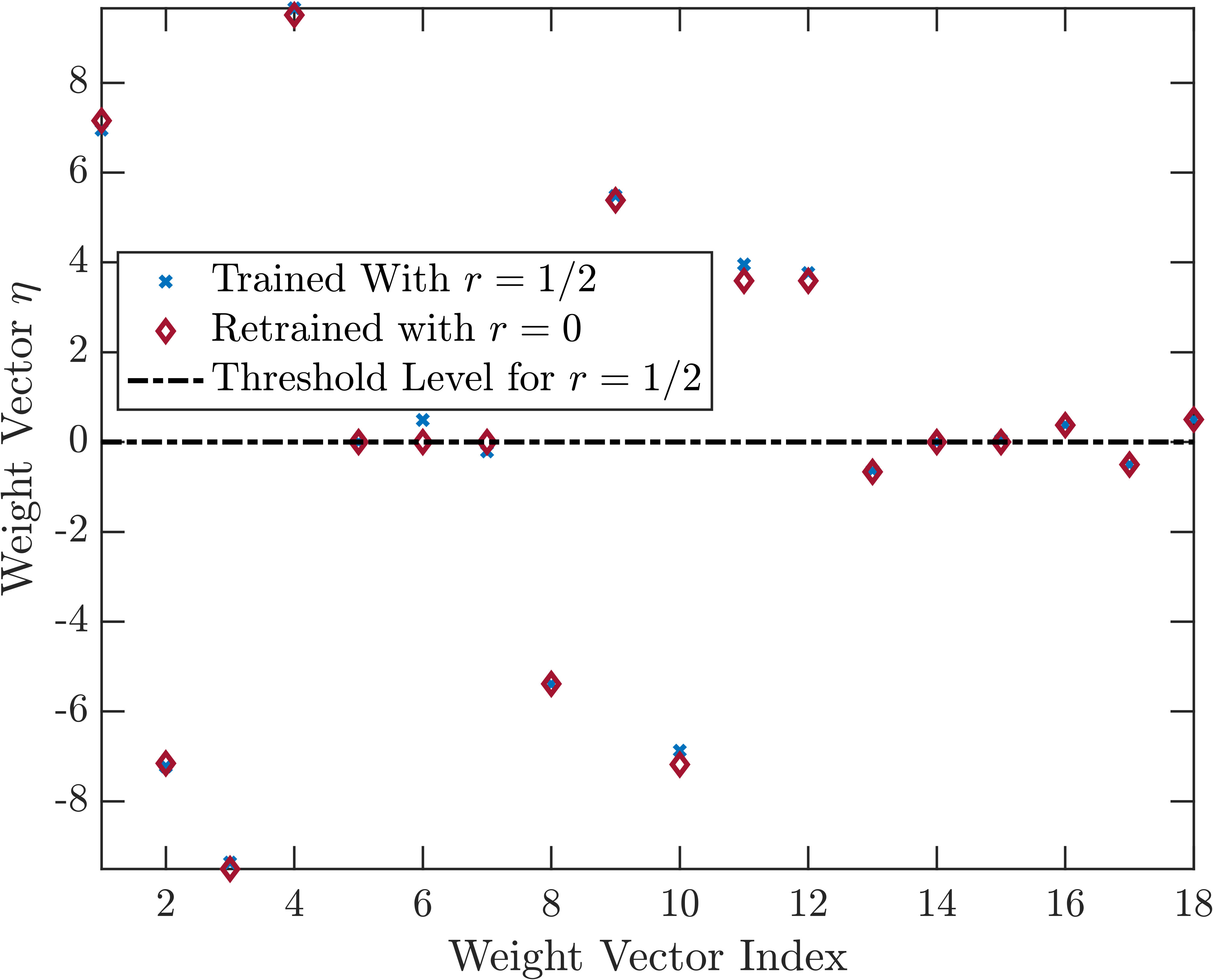}}
	\end{centering}
	\caption{The correct identification, to 6 digits of precision, of a Lax pair for the integrable case of the HH system given by Hamiltonian~\eqref{eq:HHHam}. The panel on the left reproduces the expected structure of the known Lax pair. The panel on the right discovers that, up to a sign, the transpose of the known Lax pair is also valid for reproducing the HH system.}\label{fig:SparseHH}
\end{figure}

We report that the discovered Lax pairs reproduce the dynamical system to 6 digits of precision with a loss on the order of $10^{-15}$. In fact, we recognize that one Lax pair has not been previously reported. Some simple algebra indeed shows that up to a sign, the transpose of the known Lax pair also reproduces the Henon-Heiles system. Although this finding is only mildly interesting, it once again provides further confidence that SILO, with high likelihood, indeed identifies all possible sparse options for compatible Lax pairs.

Our best results through sparsification find that the loss is a number on the order of $10^{-15}$ in the integrable case of $A=B=1$ and $\varepsilon=1/3$. We now ask, is SILO sensitive enough to detect when a system is non-integrable?  Our findings suggest that it is. We show the numerical results of the search for Lax pairs, without sparsification, as $A$ and $\varepsilon$ are varied, with $B=1$ fixed, in Figure~\ref{fig:HHdetection}. These results show a significant increase in the computed loss--by several orders of magnitude--across the parameter space. Thus, restricted to the operator hypothesis used, we can determine where the integrability lies in the parameter space $(A,\varepsilon)$ for fixed $B$. We note that sparsification is not used for this type of study because the focus is not on the interpretation of the discovered Lax pair, but instead on whether compatible Lax pairs could be identified at all.

\begin{figure}[htbp]
\begin{centering}
\subfigure{\includegraphics[width=0.47\textwidth]{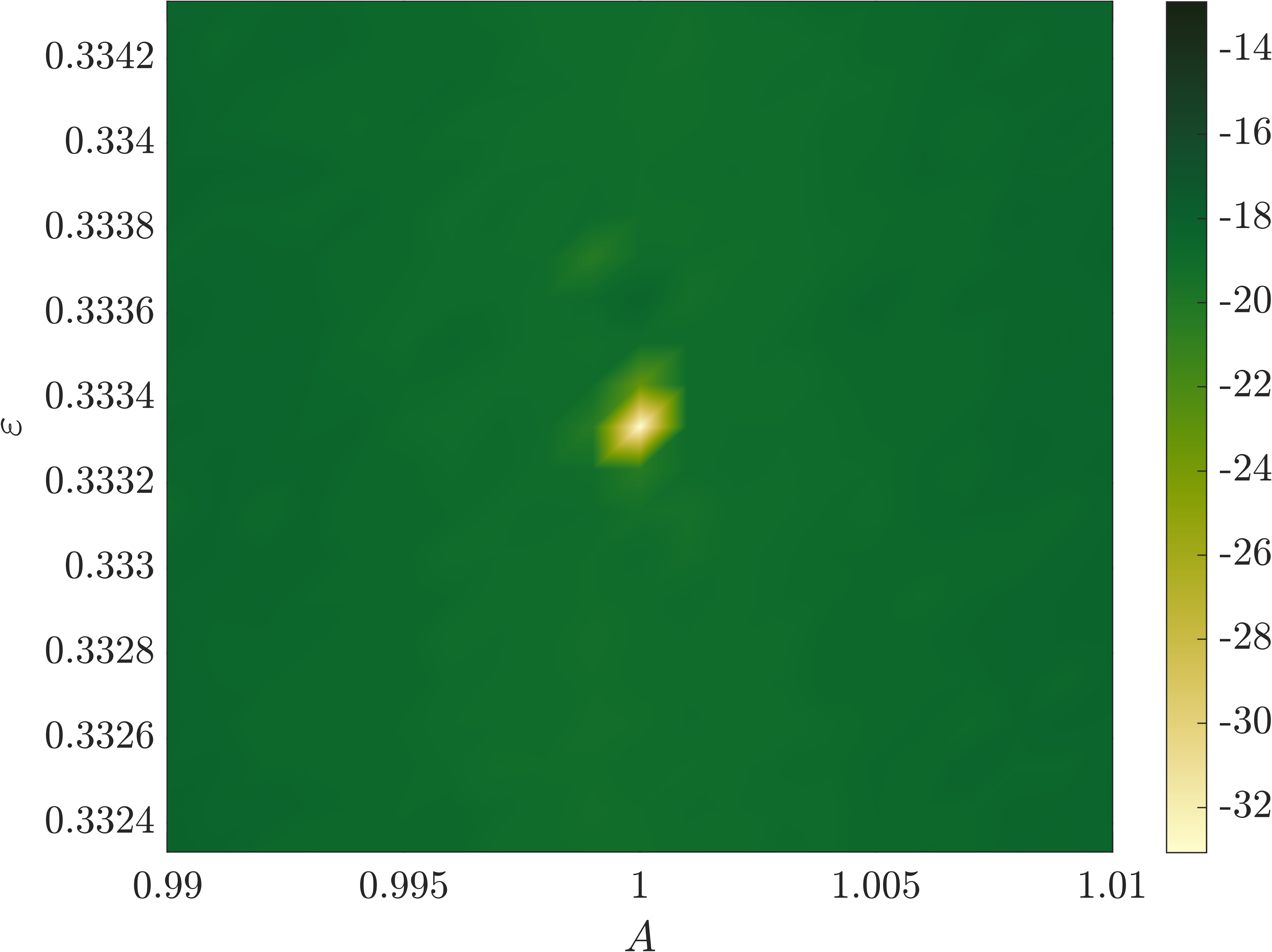}}~~
\subfigure{\includegraphics[width=0.45\textwidth]{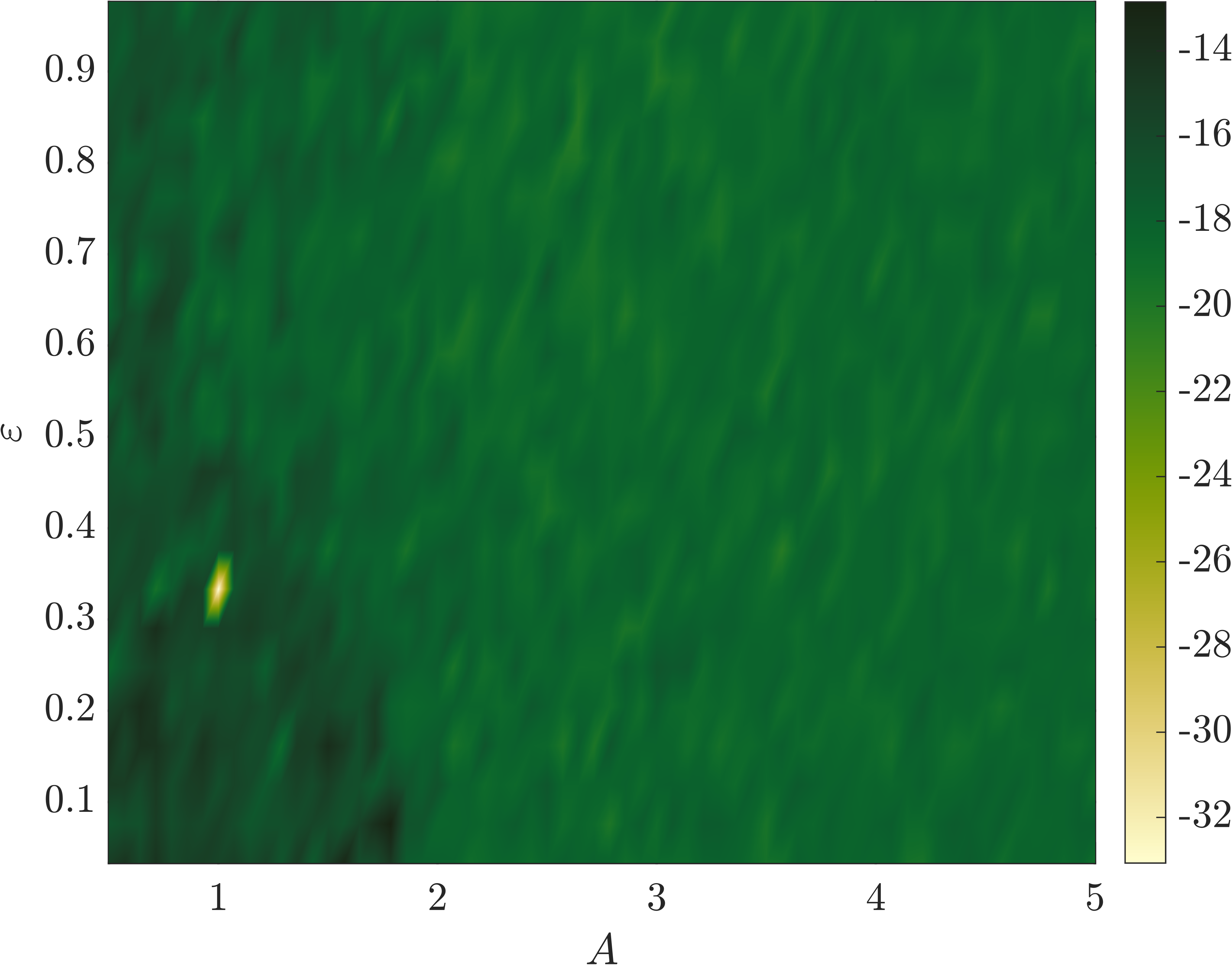}}
\end{centering}
\caption{A broad parameter search (with $B$ fixed to 1) for integrability detection in the HH system given by Hamiltonian~\eqref{eq:HHHam}. The optimization of the loss function, shown on a logarithmic scale, identifies a distinct position at $(A, \varepsilon) = (1, 1/3)$ where integrability is meaningfully detected, differing by several orders of magnitude from background loss values. Cubic spline interpolation of the landscape is used for visualization.}\label{fig:HHdetection}
\end{figure}

\section{The Korteweg-deVries Equation}\label{section:KdV}
We now extend our methodology to Hamiltonian partial differential equations beginning with the Korteweg-deVries (KdV) equation
\begin{equation}\label{eq:KdV}
	\partial_tu-6 u\partial_xu+\partial_x^3u=0.
\end{equation}
We first review the Hamiltonian form of the KdV equation~\cite{zakharov1971korteweg} which is given by
$$
u_t=Q \frac{\delta H}{\delta u}
$$
where
\begin{equation}\label{eq:KdVHam}
	H =\int_{-\infty}^{\infty}\left(u^3-\frac{1}{2} u u_{x x}\right) d x:=\int_{-\infty}^{\infty} h\left(u, u_{x x}\right) d x.
\end{equation}

The KdV equation is bi-Hamiltonian; Two choices of $Q$ 
(with corresponding choices of $H$) 
yield the equation of motion. For the choice of
$H$ used here, the relevant operator is $Q=\partial_x$. Specifically, since the Fr{\'e}chet derivative of the functional $H$ is given by
$$
\frac{\delta H}{\delta h}=\frac{\partial h}{\partial u}-\partial_x \frac{\partial h}{\partial u_x}+\partial_x^2 \frac{\partial h}{\partial u_{x x}}.
$$
we see that
$$
\partial_x \frac{\delta H}{\delta u}=6 u\partial_xu-\partial_x^3u=\partial_tu,
$$
reproduces the KdV equation~\eqref{eq:KdV}.

The KdV equation on the real line constitutes a prototypical example of a Hamiltonian PDE that is integrable. A Lax pair is known~\cite{gardner,ablowitz1981solitons}:
\begin{equation}\label{eq:KdVLax}
\begin{aligned}
	L&=-\partial_x^2+u, \\
	P&=4 \partial_x^3-6 u \partial_x-3 u_x.
\end{aligned}
\end{equation}
The KdV equation can thus be viewed as the compatibility between these differential operators; that is, the equation
$$
\partial_t L=[L, P]
$$
reproduces Equation~\eqref{eq:KdV}.

To yet again build a loss function that does not involve the explicit use of time derivatives, we use the chain rule. Observe that
$$
\begin{aligned}
	\partial_t L & =\frac{\partial L}{\partial u} \frac{\partial u}{\partial t} \\
	&= \frac{\partial L}{\partial u} Q\frac{\delta H}{\delta u}=[L, P].
\end{aligned}
$$
Thus, the expression $\frac{\partial \tilde{L}}{\partial u} Q\frac{\delta H}{\delta u}$ plays the role of the Poisson bracket in the design of the loss functions from previous sections. By direct analogy with Problem~\eqref{eq:SHOProb}, our optimization problem for this Hamiltonian setting is given by
\begin{equation}\label{eq:KdVProb}
	\min _{\eta\in\mathbb{R}^{N_{\eta}}}J[\eta]=\min _{\eta\in\mathbb{R}^{N_{\eta}}}
    \frac{1-r}{N_{\rm samples}}\sum_{j=1}^{N_{\rm samples}} \left.\frac{\int\left|\frac{\partial \tilde{L}}{\partial u} Q \frac{\delta H}{\delta u}u_j-[\tilde{L}, \tilde{P}]u_j\right|^2dx}{\int\left|\frac{\partial \tilde{L}}{\partial u} Q \frac{\delta H}{\delta u}u_j\right|^2dx}\right|_{u_j\in\Omega}+r\mathcal{R}^*(\eta)
\end{equation}
where $\Omega$ is a subset of the KdV phase space and $\mathcal{R}^*$ is the sparsification function given by Equation~\eqref{eq:bestR} earlier.

In this PDE setting, there are only two additional modifications we need to make to the numerical framework from previous sections. First, we must carefully consider how we sample the phase space and how we construct the operator hypotheses. This amounts to sampling from the function space $L^p(\mathbb{R})$, $p>1$. To this end, we construct random samples from the overcomplete basis
\begin{equation}\label{eq:PDEsample}
u^{\textrm rand}(x)=N\sum_j e^{-a_j(x-b_j)^2}\sum_k\frac{A_{jk}}{k^3}\sin\frac{k\pi x}{L}
\end{equation}
where all parameters $a_j,\ b_j, A_{jk}$ are appropriately sampled from uniform distributions, $L$ is the length of the truncated spatial domain, and the coefficient $N$ ensures a unit norm in the space $L^1(\mathbb{R})$. In this way, we try to verify the compatibility of our operators with functions sampled from the function space $C^{3}(\mathbb{R})\cap L^p(\mathbb{R})$, $p>1$ and with equivalent masses in the sense of $L^1(\mathbb{R})$. The smoothness of our samples is ensured by the decay of the Fourier coefficients $A_{j,k}/k^3$ while also being regularized by the fact that we only take the sum over $k$ to be finite. We used $k=10$ throughout our study. 
This smoothness is used for computational tractability in the evaluation of differential operators in the loss. We will comment on how this choice of sampling, as opposed to sampling from any other function space, affects numerical results shortly.

The second modification is the operator hypothesis. In general, Lax pairs are linear differential operators (in $x$) with coefficients dependent on $x,\ u,$ and derivatives of $u.$ A fairly wide class of operators is given by
\begin{align*}
	\tilde{L}&=\xi_1u+\xi_2\partial_x+\xi_3\partial_x^2,\\ \tilde{P}&=\sum_{l} \sum_j \sum_k \sum_m \zeta_{l j k m} x^{l-1} u^{j-1} \left(\partial_x^{k-1} u\right) \partial_x^{m-1}
\end{align*}
where the indices in the sum all start at one. It may seem like the hypothesis on $\tilde{L}$ is overly restrictive. However, we must keep in mind that we seek to reproduce an evolution equation involving $\partial_t\tilde{L}$. Should any higher powers of $u$ enter into the hypothesis of $\tilde{L},$ then the likelihood of finding an \textit{explicit} equation of the form $u_t=F(u,u_x,u_{xx},...)$ decreases substantially.  In this sense, we seek to preserve the
semi-linear (in time) nature of the PDE of interest.

We also comment that the dependence on polynomial coefficients $x$ is redundant when it comes to Lax pairs for the KdV equation. Therefore, to keep our methodology simple, we drop the dependence on the monomials $x^l$ and its associated index on the tensor coefficients $\xi$ and $\zeta.$ We also assume that $P$ has at most third-order coefficients. For clarity on these choices, reproducing the KdV Lax pair amounts to setting $\xi_{1}=1$, $\xi_{3}=-1,$ $\zeta_{4,1,1}=4$, $\zeta_{2,2,1}=-6$, $\zeta_{1,1,2}=-3$ with the remaining 34 coefficients set to zero. Lastly, to compute the spatial derivatives in the operator hypothesis, we use the Fast Fourier transform. That is, we use the formula
$$\partial^j_xu=\mathcal{F}^{-1}\left\{(ik)^j\mathcal{F}\{u\}\right\}$$
to compute the derivatives spectrally, where $k$ denotes the grid-dependent wavenumbers.

Since we are computing derivatives assuming the use of periodic boundary conditions, we sample from $\Omega$, ensuring that all samples have compact support to double precision in the interval $[-20\pi,20\pi]$. Despite using just $2^{11}$ points, we verify that the loss in Problem~\eqref{eq:KdVProb}, evaluated at $r=0$ and at the values of $\eta$ that give the known KdV Lax pair, is zero to double precision over 100 such samples from $\Omega$. 

In our training, we find that only using $20$ samples is sufficient in our cross-validation. After training on these samples, we find that the loss, on average and without sparsification, is on the order of $10^{-11}$ when evaluated on unseen samples from $\Omega.$ We show, in Figure~\ref{fig:KdVResult}, a visual comparison between $\frac{\partial \tilde{L}}{\partial u} Q \frac{\delta H}{\delta u}$ and $[\tilde{L}, \tilde{P}]$ for four unseen sample functions. Note that despite the unit norms of the samples in $L^1(\mathbb{R})$, the generalized Poisson bracket and matrix commutators have arbitrary norms, which account for the variance of the scales across these images.

\begin{figure}[htbp]
	\begin{centering}		\subfigure{\includegraphics[width=0.45\textwidth]{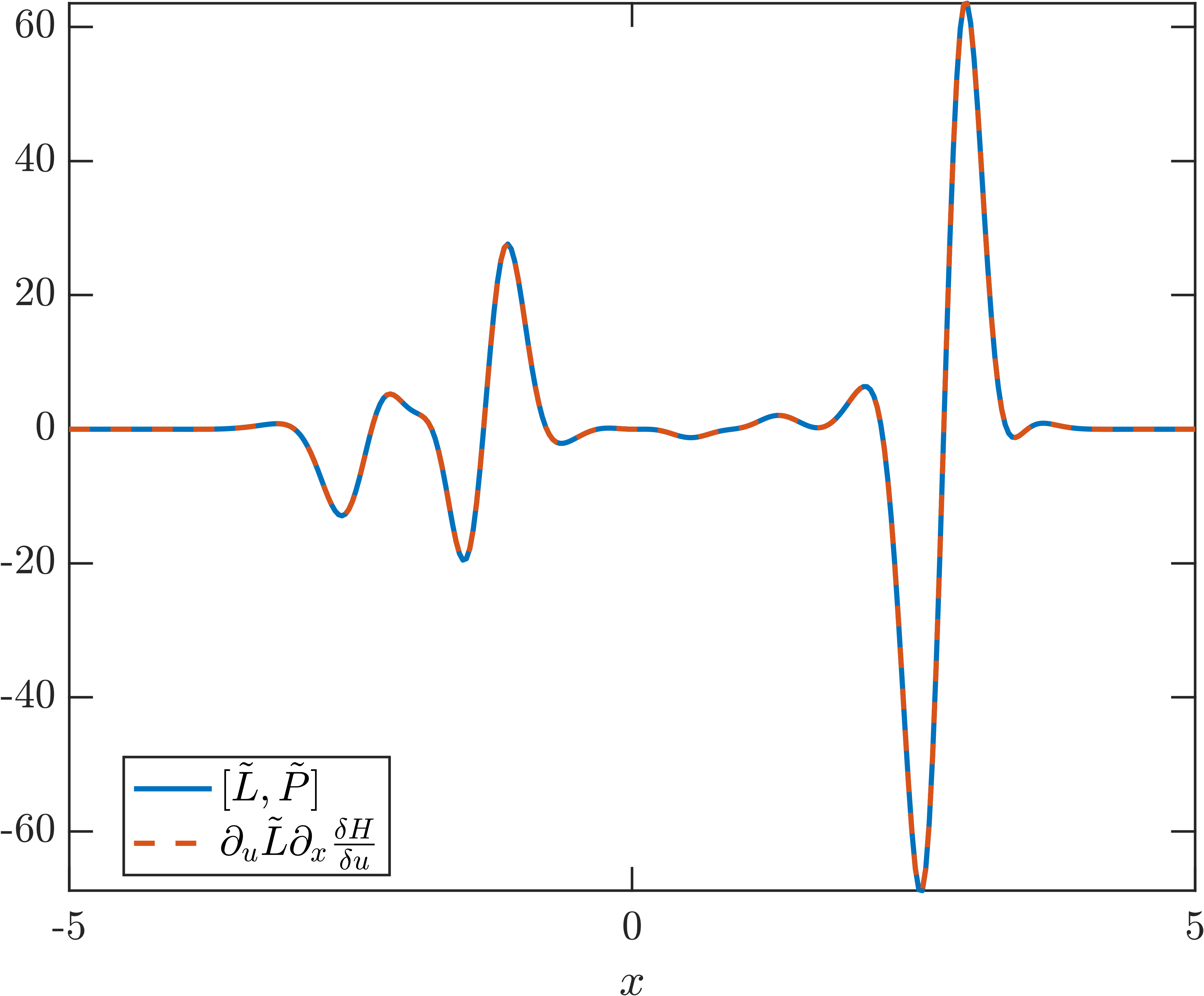}}\subfigure{\includegraphics[width=0.45\textwidth]{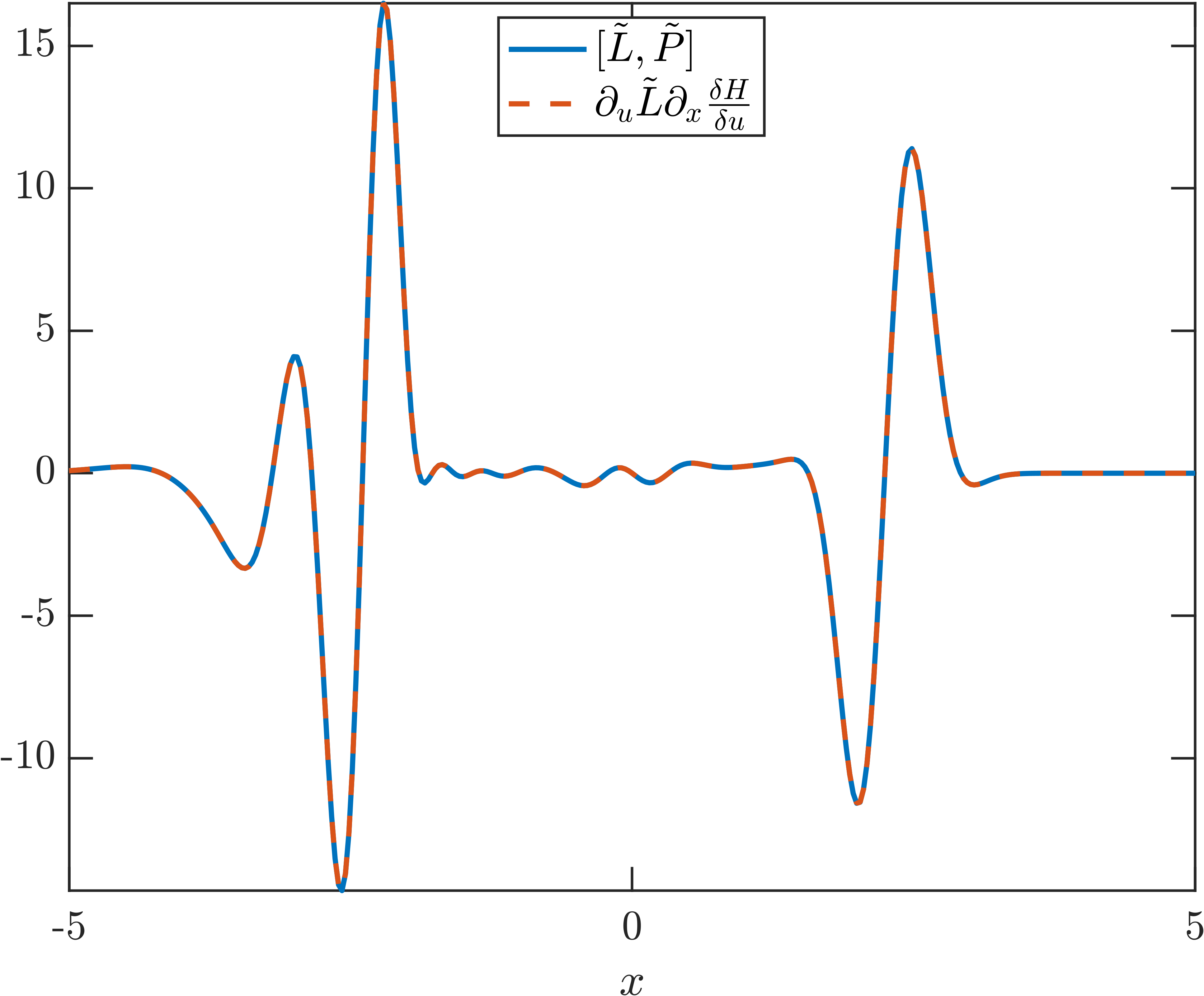}}
\subfigure{\includegraphics[width=0.46\textwidth]{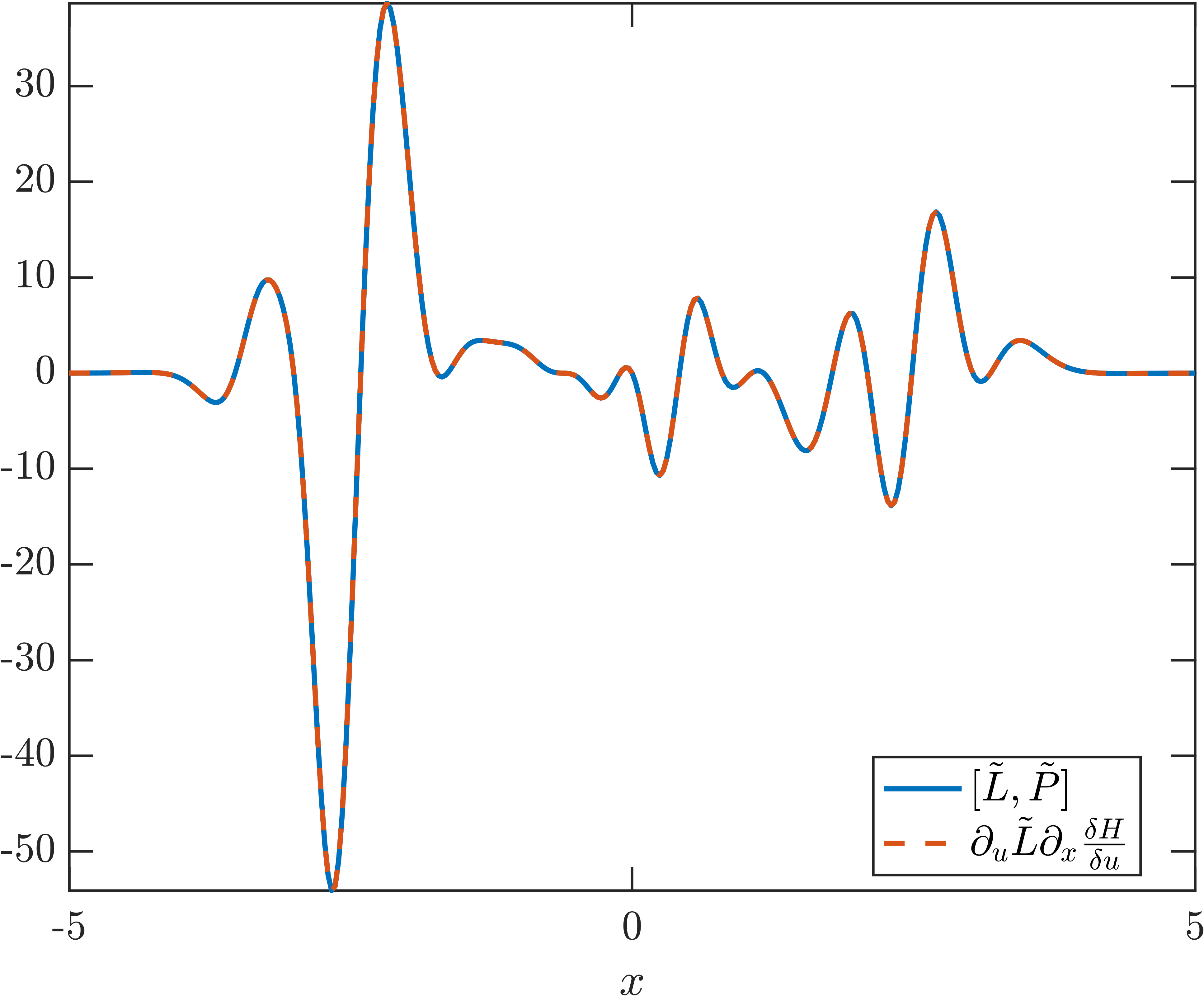}}
\subfigure{\includegraphics[width=0.45\textwidth]{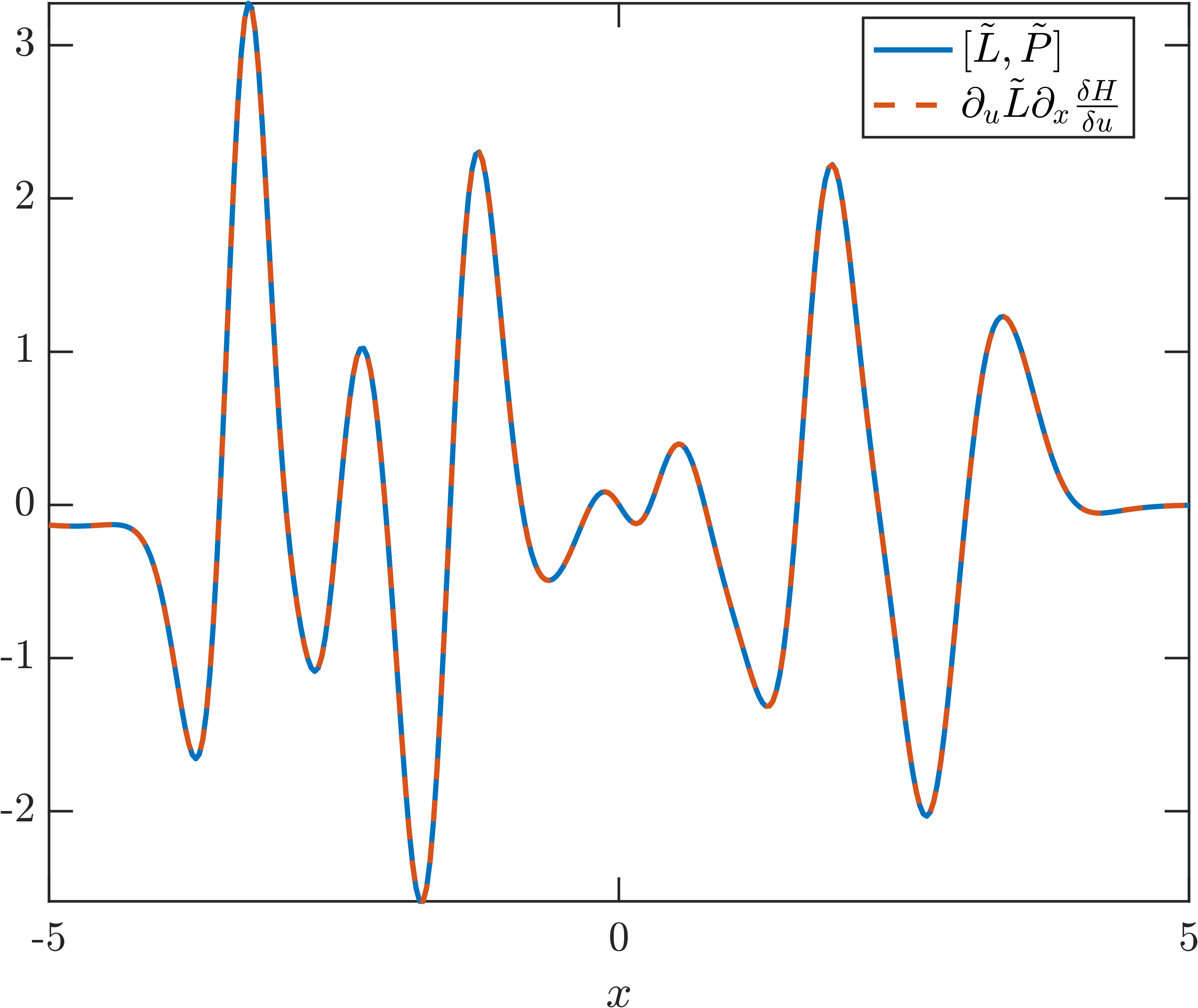}}
	\end{centering}
	\caption{A numerical result of solving Problem~\eqref{eq:KdVProb} without sparsification. Visualized here is a cross-validation study displaying the generalized Poisson brackets and commutators evaluated at the optimal point $\eta^*$ and on four samples from the function space $\Omega$ that were unseen during training. For all four cases, the loss is on the order of $10^{-11}$.}\label{fig:KdVResult}
\end{figure}

The following remark is critical both mathematically and computationally. The careful reader will notice that we have not made a distinction between the functions $u(x)$ used in our operator hypotheses and the function $u(x)$ on which the operators act. That is because, computationally, we do not. This is {\bf a mathematically restricted} interpretation of how Lax pairs work
since, in principle, the relevant action of the operators in the
Lax equation is applicable to {\it arbitrary} functions, that is,
this is valid as an operator equation. 
Nevertheless, despite its limitations, we will show that this \textbf{computational interpretation} of the problem still achieves many of the desired objectives of detecting integrability and discovering sparse Lax pairs.

A natural question arises: why not adopt the interpretation
$$
\left(\partial_tL(u)-[L(u),P(u)]\right)w=0, \quad \forall w\in\mathcal{F}
$$
where $\mathcal{F}$ is the relevant function space? We did pursue this. We used the simple scenario of fixing $u$ as a single function, specifically $u(x)=e^{-x^2}.$ Then we randomly sample 20 functions $w$ using our sampling scheme. We report that our optimization methodology employed up until now absolutely fails to find any Lax pair that minimizes the unbiased loss past values smaller than $10^{-4}.$ This, as we found, turns out not to be an acceptable order of magnitude to resolve the detection of integrability against smooth nonintegrable perturbations; see Figure~\ref{fig:KdVPerturb} to observe the degree of precision required to detect integrability. 
Of course, this is a point that merits further investigation both
mathematically and through computational experiments, especially since this order of magnitude for the loss is often deemed acceptable when using neural networks; see, for example,
~\cite{Liu_2022} (Table II)  or~\cite{zhu2023machine}.

It is rather remarkable that this mathematically restricted interpretation of the optimization problem still yields mathematically correct results. We leveraged substantial computational resources to solve the unrestricted mathematical interpretation of the optimization problem to no avail (at least for the prescribed numerical precision). For this reason, we believe that our interpretation, relative to our current computational framework, is crucial to ensuring the desired degree of tractability for this problem.

To demonstrate the validity of our approach, we show that our framework is sensitive enough to detect the known integrability of the KdV equation. Consider the non-integrable perturbation $h_1=\frac{1}{2}\left(\partial_x^2 u\right)^2$. We solve Problem~\eqref{eq:KdVProb}, again without sparsification, for Hamiltonian densities $h+\varepsilon h_1$, where $h$ is defined in Equation~\eqref{eq:KdVHam} and $\varepsilon\in[-.01,.01]$. Figure~\ref{fig:KdVPerturb} shows that the loss has a nearly smooth dependence on the parameter $\varepsilon$ with a minimum at the expected integrable point $\varepsilon=0$. Again, the integrability point is privileged, as the loss is several orders of magnitude smaller than the nearby values of $\varepsilon.$ 

\begin{figure}[htbp]
	\begin{centering}		\subfigure{\includegraphics[width=.7\textwidth]{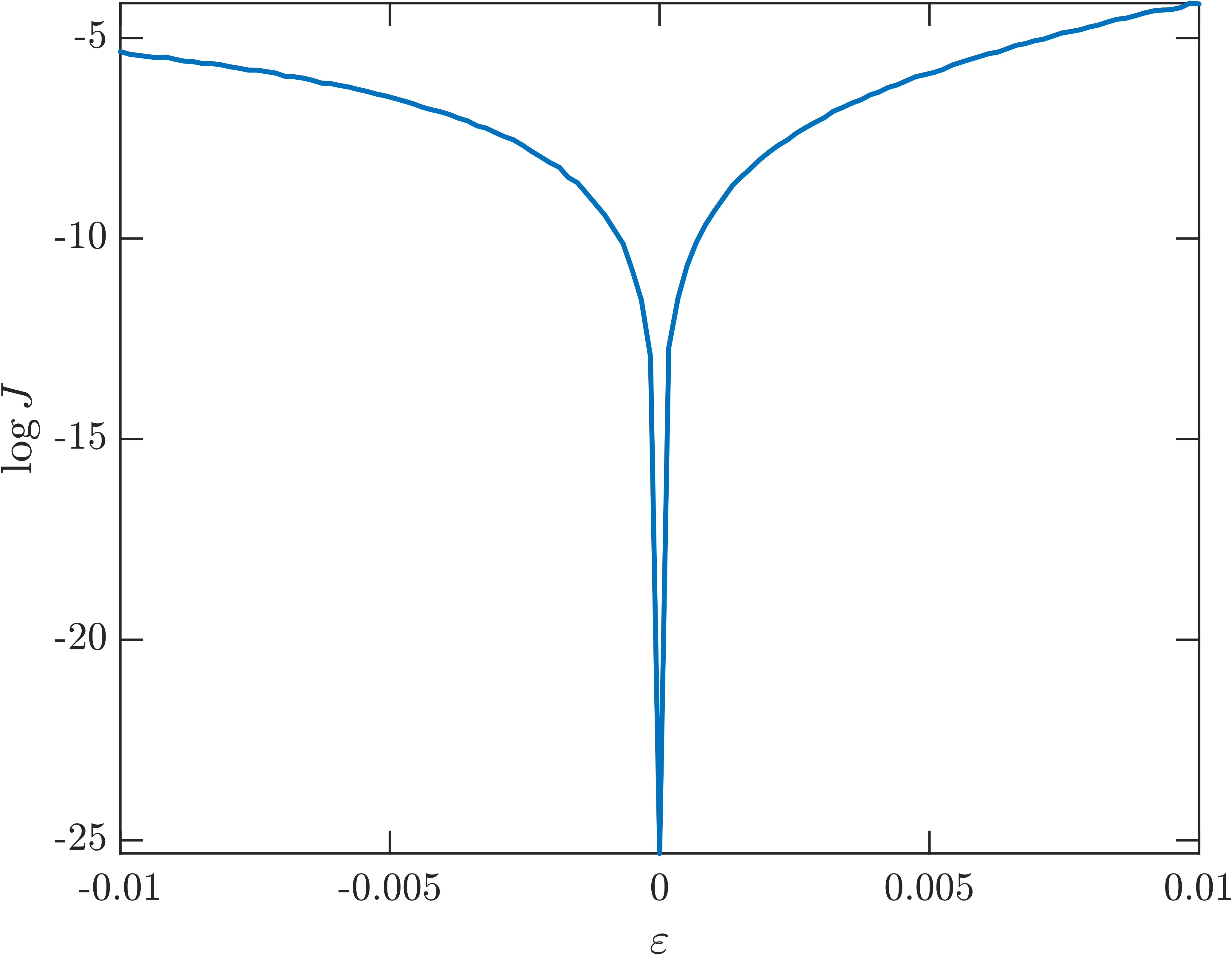}}
	\end{centering}
	\caption{A perturbation study using the perturbed Hamiltonian density $ h_1=\frac{\varepsilon_1}{2}\left(\partial_x^2 u\right)^2$. Shown here is the numerical solution of Problem~\eqref{eq:KdVProb}, without sparsification, with density $h+\varepsilon h_1$. We observe a near-smooth dependence on $\varepsilon$ with a clearly discernible
    ``special'' point associated with the detection of integrability at $\varepsilon=0$. }\label{fig:KdVPerturb}
\end{figure}

We now discuss the interpretability of building Lax pairs from solving Problem~\eqref{eq:KdVProb}. Without sparsification, it is not surprising that all 39 coefficients in our operators are activated. Therefore, even with computer algebra systems such as Mathematica, we have no chance to interpret the PDE that the Lax pair is producing.  Before discussing what we discover through sparsification, we make the following basic observations. 

Suppose that we were fortunate enough to choose the perfect operator hypothesis
$$\begin{aligned} 
	&L=a \partial_x^2+b u,\\
	&P=c \partial_x^3+d u \partial_x+e \partial_xu.
\end{aligned}
$$
We can determine these coefficients by hand and show that they match the operators in Equation~\eqref{eq:KdVLax}, up to a constant factor in the operator $L$. This implies that if our numerical method successfully identifies this Lax pair, it should do so uniquely, modulo a scaling factor. Because of our restricted approach of taking the coefficients in the Lax pair to be the same functions we evaluate on, this introduces some indeterminacy. A calculation by hand shows that the compatibility of these operators, with the interpretation $\partial_tL(u)u=[L(u),P(u)]u,$ yields
$$\begin{aligned}
	a e-c b +b& =0, \\
	-b d-6b&=0, \\
	3 a d+2 a e  -6 c b&=0.
\end{aligned}
$$
This system is, of course, underdetermined featuring $3$ equations
for $5$ unknowns. If we solve Problem~\eqref{eq:KdVProb} with this perfect hypothesis, we find that the combination of parameters identified is such that these equations are satisfied to $\mathcal{O}(10^{-6}).$ Meanwhile, the loss is $\mathcal{O}(10^{-12}).$

We also make the observation that, in principle, we have the freedom to  rescale by space and amplitude so that we can remove two parameters, that is,
$$\begin{aligned} 
	&L=\partial_x^2-u,\\
	&P=c \partial_x^3+d u \partial_x+e u_x
\end{aligned}
$$
and where the three remaining parameters have been appropriately rescaled. Upon solving the optimization problem, we practically recover the known coefficients. We find that $c=4,\ d=6,$ and $e=-3$ to 6 digits of precision with a loss that evaluates to $\mathcal{O}(10^{-13})$. Thus, with all of this fortuitous hindsight, we can rediscover the expected KdV Lax pair with a small loss and with no ambiguity in the coefficients. These observations give us a target and a level of precision to aim for as we employ our sparsification strategy.

Once again, we use the same sparsification from the finite-degree-of-freedom setting to aid us in our interpretation of our discovered Lax pairs. Interestingly, our numerics have a lot of difficulty rediscovering the known KdV Lax pair just discussed and commonly reported in the literature because, in fact, \textit{this is not the most parsimonious Lax pair that exists}. 
Our numerics discover, to the best of our knowledge, two entirely new families of Lax pairs, each containing more parsimonious four-term Lax pairs---instead of the five-term commonly reported one in Equation~\eqref{eq:KdVLax}---as special cases. Before discussing these more parsimonious Lax pairs, we state the discovered Lax pair in its full generality.

\begin{theorem}[Existence of a new KdV Lax pair]\label{thm:strongthm}
For every $u\in C^1([0,T];C^3(\mathbb{R}))$, there exists a parameter $v\in\mathbb{R}$ such that the pair of operators
    $$\begin{aligned} 
	&L=\alpha u+\beta\partial_x,\\
	&P=\gamma u+\delta u^2+\epsilon u_{xx}+\kappa\partial_x,
\end{aligned}
$$
satisfying Lax's equation $\partial_tL=[L,P],$ understood as acting on the function space $C^3(\mathbb{R}),$ reproduces the KdV equation
$$
u_t=\frac{2 \beta  \delta}{\alpha }  u u_x+\frac{\beta  \epsilon}{\alpha }u_{\text{xxx}}
$$
in the co-traveling reference frame $x\to x-vt$.
\end{theorem}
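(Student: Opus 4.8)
### Proof Proposal

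The plan is to verify the claim by direct computation, treating $\partial_t L = [L,P]$ as an operator identity on $C^3(\mathbb{R})$ and comparing coefficients of the independent differential operators $\partial_x^k$ on both sides. First I would compute the left-hand side: since $L = \alpha u + \beta\partial_x$ and $\beta$ is a constant, $\partial_t L = \alpha u_t \cdot(\text{multiplication operator})$, so the only contribution is the zeroth-order multiplication operator $\alpha u_t$. Next I would expand the commutator $[L,P] = LP - PL$ with $P = \gamma u + \delta u^2 + \epsilon u_{xx} + \kappa\partial_x$. The key algebraic fact I would use repeatedly is that for multiplication operators $f,g$ one has $[f,g]=0$, while $[\partial_x, g] = g_x$ (as a multiplication operator), so $[\beta\partial_x, \gamma u + \delta u^2 + \epsilon u_{xx}] = \beta(\gamma u_x + 2\delta u u_x + \epsilon u_{xxx})$ and $[\alpha u, \kappa\partial_x] = -\alpha\kappa u_x$, while $[\beta\partial_x,\kappa\partial_x]=0$ and all the purely multiplicative cross terms vanish. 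Collecting everything, $[L,P]$ reduces to a single multiplication operator: $\beta(\gamma u_x + 2\delta u u_x + \epsilon u_{xxx}) - \alpha\kappa u_x$.

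Equating $\alpha u_t = \beta\gamma u_x + 2\beta\delta u u_x + \beta\epsilon u_{xxx} - \alpha\kappa u_x$ and dividing by $\alpha$ gives
\begin{equation*}
u_t = \left(\frac{\beta\gamma}{\alpha} - \kappa\right)u_x + \frac{2\beta\delta}{\alpha}u u_x + \frac{\beta\epsilon}{\alpha}u_{xxx}.
\end{equation*}
The term $\bigl(\tfrac{\beta\gamma}{\alpha}-\kappa\bigr)u_x$ is a pure transport term, which I would absorb by passing to the co-traveling frame $x \to x - vt$ with the choice $v = \kappa - \beta\gamma/\alpha$; this is precisely the free parameter $v\in\mathbb{R}$ promised in the statement (and it shows $\gamma,\kappa$ are essentially gauge, affecting only the frame). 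In the new frame the equation becomes exactly $u_t = \tfrac{2\beta\delta}{\alpha}u u_x + \tfrac{\beta\epsilon}{\alpha}u_{xxx}$, which is the stated form of KdV (matching $\partial_t u - 6u\partial_x u + \partial_x^3 u = 0$ after rescaling space, time, and amplitude, as discussed in the paragraph preceding the theorem). The requirement $u\in C^1([0,T];C^3(\mathbb{R}))$ is exactly what is needed for $u_t$, $u_x$, $u_{xxx}$ to be well-defined and for the operator manipulations on $C^3(\mathbb{R})$ to be legitimate.

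There is essentially no hard obstacle here: the main point is organizational care in expanding the commutator and correctly tracking which terms are genuinely multiplication operators versus first-order operators, ensuring that all first-order and higher-order operator contributions cancel so that $[L,P]$ is purely multiplicative (as it must be to match $\partial_t L$). The only subtlety worth stating explicitly is why the indeterminacy is exactly one-dimensional: the five library coefficients $\alpha,\beta,\gamma,\delta,\epsilon,\kappa$ minus the overall scaling freedom in $L$ and the space/amplitude rescalings leave a residual freedom that manifests as the velocity $v$ of the reference frame, so the theorem is stated "in full generality" precisely by carrying $v$ along rather than normalizing it away. I would close by remarking that setting $\gamma = \kappa = 0$ (so $v=0$) already yields a genuine four-term Lax pair $L = \alpha u + \beta\partial_x$, $P = \delta u^2 + \epsilon u_{xx}$, more parsimonious than the classical five-term pair in Equation~\eqref{eq:KdVLax}.
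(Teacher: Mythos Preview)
Your proposal is correct and follows essentially the same route as the paper: a direct expansion of $[L,P]$ via the commutator identities $[f,g]=0$ and $[\partial_x,g]=g_x$, recognition that the result is a pure multiplication operator $(\beta\gamma-\alpha\kappa)u_x+2\beta\delta u u_x+\beta\epsilon u_{xxx}$, and absorption of the linear transport term by a Galilean boost. Your value $v=\kappa-\beta\gamma/\alpha$ differs in sign from the paper's $v=(\beta\gamma-\alpha\kappa)/\alpha$, but since the theorem only asserts existence of some $v\in\mathbb{R}$ this is immaterial (and your sign is in fact the one consistent with the convention $x\to x-vt$).
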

\begin{proof}
    By direct calculation, observe that for every $w\in C^3(\mathbb{R}),$ we have
    $$
    [L,P]w=LPw-PLw=\left((\beta  \gamma -\alpha  \kappa)  u_x+2 \beta  \delta  u u_x+\beta  \epsilon  u_{xxx}\right)w.
    $$
    Since $\partial_tL=\alpha u_t$ we see that the operator equation $\partial_tL=[L,P]$ reproduces the desired KdV equation after the appropriate Galilean boost of velocity $v=\frac{\beta  \gamma -\alpha  \kappa}{\alpha}$.
\end{proof}

Before moving forward with interpreting this result, we comment that we are aware that the function spaces for which we state our results are classical. 
In fact, it is well known that the Cauchy problem for the KdV equation is well-posed for initial conditions in the Sobolev space $H^s(\mathbb{R})$, $s>3/4$~\cite{kenig1991well} and even more recently it was discovered that, incredibly, the problem is well-posed in $H^{-1}(\mathbb{R})$~\cite{killip2019kdv}. For these reasons, it is obvious that we could employ standard embedding techniques~\cite{evans2022partial} to relax the regularity assumptions that we make. We find this unnecessary at this stage, as this would detract from the essence of this finding. Therefore, we proceed with classical regularity assumptions that are compatible with the discovered Lax pairs.

In light of Theorem~\ref{thm:strongthm}, the most parsimonious Lax pair has four, not five, terms. Specifically, this is the valid choice of $\gamma=\kappa=0.$ In this degenerate case, the operator $P$ is not a differential operator, as it is purely multiplicative. This degenerate example is consistently found by SILO. We delay a discussion of the modification made to find the Lax pair of Theorem~\ref{thm:strongthm} in full generality until the end of this section. This is because the second type of Lax pair that SILO consistently finds is one that only makes sense under an integral with functions that vanish on the boundary, which is what is used in the evaluation of the loss function. 
The following theorem expresses our discovery of these \textit{weak} Lax pairs precisely.

\begin{theorem}[Existence of a Weak Lax Pair]\label{thm:weakthm} Denote the Schwartz space of distributions by $\mathcal{S}(\mathbb{R})$. Then, for every $u\in H^2([0,T];H^4(\mathbb{R}))$, the Lax pair
    $$\begin{aligned} 
	&L=\alpha u+\beta\partial_x^2,\\
	&P=\delta u_x+\varepsilon u\partial_x,
\end{aligned}
$$
satisfying 
$$\int_{\mathbb{R}}\left( \partial_tL-[L,P]\right)\varphi(x) dx=0, \quad\forall \varphi\in\mathcal{S}(\mathbb{R})
$$
reproduces the equation
\begin{align}
\label{eq:weak_kdv}
    \alpha u_t+\beta(\varepsilon-\delta)u_{xxx}+\alpha\varepsilon uu_x=0
\end{align}
Lebesgue almost everywhere.
\end{theorem}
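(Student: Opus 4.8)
The plan is to turn the weak Lax equation into the distributional form of \eqref{eq:weak_kdv} and then apply the fundamental lemma of the calculus of variations. First I would compute the operator $\partial_t L-[L,P]$ by hand. Because $\beta\partial_x^2$ is independent of $t$, the time derivative $\partial_t L$ is simply multiplication by $\alpha u_t$. For the commutator, I would apply $LP-PL$ to an arbitrary $w$ and expand everything with the product rule: the terms that are cubic in $u$ (coming from $\alpha\delta\,uu_xw$ and $\alpha\varepsilon\,u^2w_x$) cancel between $LP$ and $PL$, and so do the pure third-derivative term $\beta\varepsilon\,uw_{xxx}$ and the mixed term $\beta\delta\,u_xw_{xx}$. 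What remains is a differential operator in $x$ of order at most two whose action on $w$ is a linear combination of $u_{xxx}w$, $u_{xx}w_x$, $u_xw_{xx}$ and $uu_xw$, with coefficients that are explicit monomials in $\alpha,\beta,\delta,\varepsilon$. This bookkeeping---keeping the operator-composition order straight and not dropping a Leibniz term---is where a slip is most likely, so I would verify the cancellations carefully (a computer-algebra check, in the spirit of the rest of the paper, is natural here).

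Second, with $w=\varphi\in\mathcal{S}(\mathbb{R})$, I would integrate $(\partial_t L-[L,P])\varphi$ over $\mathbb{R}$ and integrate by parts to push every $x$-derivative off $\varphi$ and onto the $u$-dependent coefficients; the integrals $\int u_{xx}\varphi_x\,dx$ and $\int u_x\varphi_{xx}\,dx$ each reduce to a multiple of $\int u_{xxx}\varphi\,dx$, and collecting terms yields
\[
\int_{\mathbb{R}}\big(\partial_t L-[L,P]\big)\varphi\,dx=\int_{\mathbb{R}}\big(\alpha u_t+\beta(\varepsilon-\delta)u_{xxx}+\alpha\varepsilon\,uu_x\big)\varphi\,dx,
\]
i.e.\ precisely the weak form of \eqref{eq:weak_kdv} tested against $\varphi$. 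The one analytic point is that all boundary terms in these integrations by parts vanish, and this is exactly where the hypotheses earn their keep: $u(t,\cdot)\in H^4(\mathbb{R})\hookrightarrow C^3_0(\mathbb{R})$ forces $u,u_x,u_{xx}$ to decay at $\pm\infty$, and $\varphi$ is Schwartz, so each boundary contribution is zero. This is the main obstacle, and it is a mild one.

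Third, the weak Lax identity asserts that the displayed left-hand side is zero for every $\varphi\in\mathcal{S}(\mathbb{R})$. Since $u\in H^2([0,T];H^4(\mathbb{R}))$, for almost every $t$ the integrand $G[u]:=\alpha u_t+\beta(\varepsilon-\delta)u_{xxx}+\alpha\varepsilon\,uu_x$ belongs to $L^2(\mathbb{R})\subset L^1_{\mathrm{loc}}(\mathbb{R})$ (the nonlinear term $uu_x$ being controlled by $\|u\|_{H^4}^2$ via Sobolev embedding), so the du Bois-Reymond lemma gives $G[u](t,\cdot)=0$ almost everywhere in $x$; a Fubini argument then promotes this to $G[u]=0$ Lebesgue-a.e.\ on $[0,T]\times\mathbb{R}$, which is \eqref{eq:weak_kdv}. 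The reverse implication is immediate---if \eqref{eq:weak_kdv} holds a.e.\ the integral vanishes for every $\varphi$---so the weak Lax equation and \eqref{eq:weak_kdv} are equivalent for this ansatz, and in particular any $u$ for which $(L,P)$ solves the weak Lax equation reproduces \eqref{eq:weak_kdv}. As in the discussion following Theorem~\ref{thm:strongthm}, I would close by noting that the classical regularity $H^2([0,T];H^4(\mathbb{R}))$ is used only for convenience---it does nothing beyond legitimizing the integration by parts---and could be relaxed by standard density and embedding arguments.
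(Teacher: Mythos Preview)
Your proposal is correct and follows essentially the same route as the paper: compute $(\partial_tL-[L,P])\varphi$ explicitly, arriving at the intermediate identity $\langle \alpha u_t-\beta\delta u_{xxx}+\alpha\varepsilon uu_x,\varphi\rangle-\beta(2\delta+\varepsilon)\langle u_{xx},\varphi_x\rangle-2\beta\varepsilon\langle u_x,\varphi_{xx}\rangle=0$, then integrate by parts (once and twice, respectively) to reach the weak form of \eqref{eq:weak_kdv} and invoke the Sobolev embeddings to conclude pointwise a.e. Your explicit listing of the four commutator cancellations and your use of the du~Bois--Reymond lemma plus Fubini are a slightly more detailed rendering of exactly what the paper sketches.
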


\begin{proof}
    By direct calculation, we compute the left hand side of the equation $\langle (\partial_tL-[L,P])\varphi,1\rangle_{L^2(\mathbb{R})}=0,$ in the sense of distributions, and find that
    \begin{align}
    \label{eq:weak_lax_pair_proof}
         \langle \alpha u_t-\beta\delta u_{xxx}+\alpha \epsilon uu_x,\varphi\rangle_{L^2(\mathbb{R})}-\beta(2\delta+\varepsilon)\langle u_{xx}  ,\varphi_x\rangle_{L^2(\mathbb{R})}-2\beta\epsilon\langle  u_x,\varphi_{xx}\rangle_{L^2(\mathbb{R})}=0.
    \end{align}
    Integrating by parts once in the second inner product and integrating by parts twice in the third reproduces the weak form of Equation~\eqref{eq:weak_kdv}. 
    Inferring the strong form of the KdV equation follows directly from the continuous embeddings of $H^2([0,T])\hookrightarrow C^1([0,T])$ and $H^4(\mathbb{R})\hookrightarrow C^3(\mathbb{R})$~\cite{evans2022partial}.
\end{proof}

We note that despite how the Lax pair just discussed in Theorem~\ref{thm:weakthm} reproduces the KdV equation in the sense of distributions, one can consider altering the function space so that the Lax pair is a strong one. 
 Suppose that instead of the wide function space $\mathcal{S},$ we operate on functions $\varphi\in  C^3(\mathbb{R})$ such that $L\varphi=\lambda\varphi$ for all $u\in C^1([0,T];C^3(\mathbb{R})).$  Additionally, if $2\delta+\varepsilon=0,$ then the strong Lax equation $\partial_t L=[L,P]$, with this Lax pair being the one stated in Theorem~\ref{thm:weakthm}, reproduces the KdV equation
\begin{align*}
    \alpha u_t-\beta \delta u_{x x x}+3 \alpha\varepsilon uu_x-2\varepsilon\lambda u_x=0.
\end{align*}
As was done in the proof of Theorem~\ref{thm:strongthm}, we may once again exploit the Galilean invariance of the KdV equation to frame boost away the parameter $\lambda$ through $x\to x+\frac{2\lambda\varepsilon}{\alpha} t$. Thus, we strengthen the sense in which the Lax pairs of Theorem~\ref{thm:weakthm} produce an unambiguous KdV equation without the need to integrate by parts. Note, however, that this sense of Lax pair is not fully mathematically general, as it operates only on a subspace of $C^3(\mathbb{R})$ and not the whole space of functions for which classical Lax pairs are typically thought to operate on.   Therefore, we do not claim that this Lax pair with this interpretation has the same mathematical relevance as the typical Lax pair given by Equation~\eqref{eq:KdVLax} and leave this as a curious observation in passing.

Shifting back to our interpretation of the numerics, we see that SILO indeed serves as a computationally aided proof technique in discovering these new Lax pairs, since our conclusions are indeed mathematical and only the means of discovery were computational. Although not entirely insightful, we still report the coefficients we discovered during optimization in Figure~\ref{fig:NewKdV}. There, we show coefficients that achieved unbiased losses of $\mathcal{O}(10^{-15})$. The strong Lax pair reproduces the KdV equation consistent with the Hamiltonian~\eqref{eq:KdVHam} to 8 digits of precision. The weak Lax pair, however, does not produce the KdV equation expected by Theorem~\ref{thm:weakthm}. This is because the loss function~\eqref{eq:KdVProb} involves squaring inside the integral, and integration by parts cannot be carried out as cleanly as was done in the proof of Theorem~\ref{thm:weakthm}. A tedious calculation reveals that the numerical compatibility of the coefficients is captured by the equations $\varepsilon=-6$ and $5\alpha+3\beta(\delta-\varepsilon)=0,$ where these symbols are consistent with Theorem~\ref{thm:weakthm}. Indeed, our numerically discovered coefficients satisfy these equations to 8 digits of precision.

Our findings, shown in Figure~\ref{fig:NewKdV}, show that our numerics discover 5 terms. However, each of our new KdV Lax pairs is a family with only four terms.  This is because the coefficients $\eta_4$ shown in both panels correspond to the constant operator appearing in $\tilde{P}$, that is, where all the indices produce $\zeta_{1,1,1}$. The calculations in the proofs of Theorems~\ref{thm:strongthm} and~\ref{thm:weakthm} with the additional constant $\gamma$ added to the operators $\tilde{P}$ would show that the reproduced KdV equations have no terms involving $\gamma.$ Therefore, the numerically discovered Lax pairs actually only have four meaningful coefficients.

\begin{figure}[htbp]
	\begin{centering}		\subfigure{\includegraphics[width=0.45\textwidth]{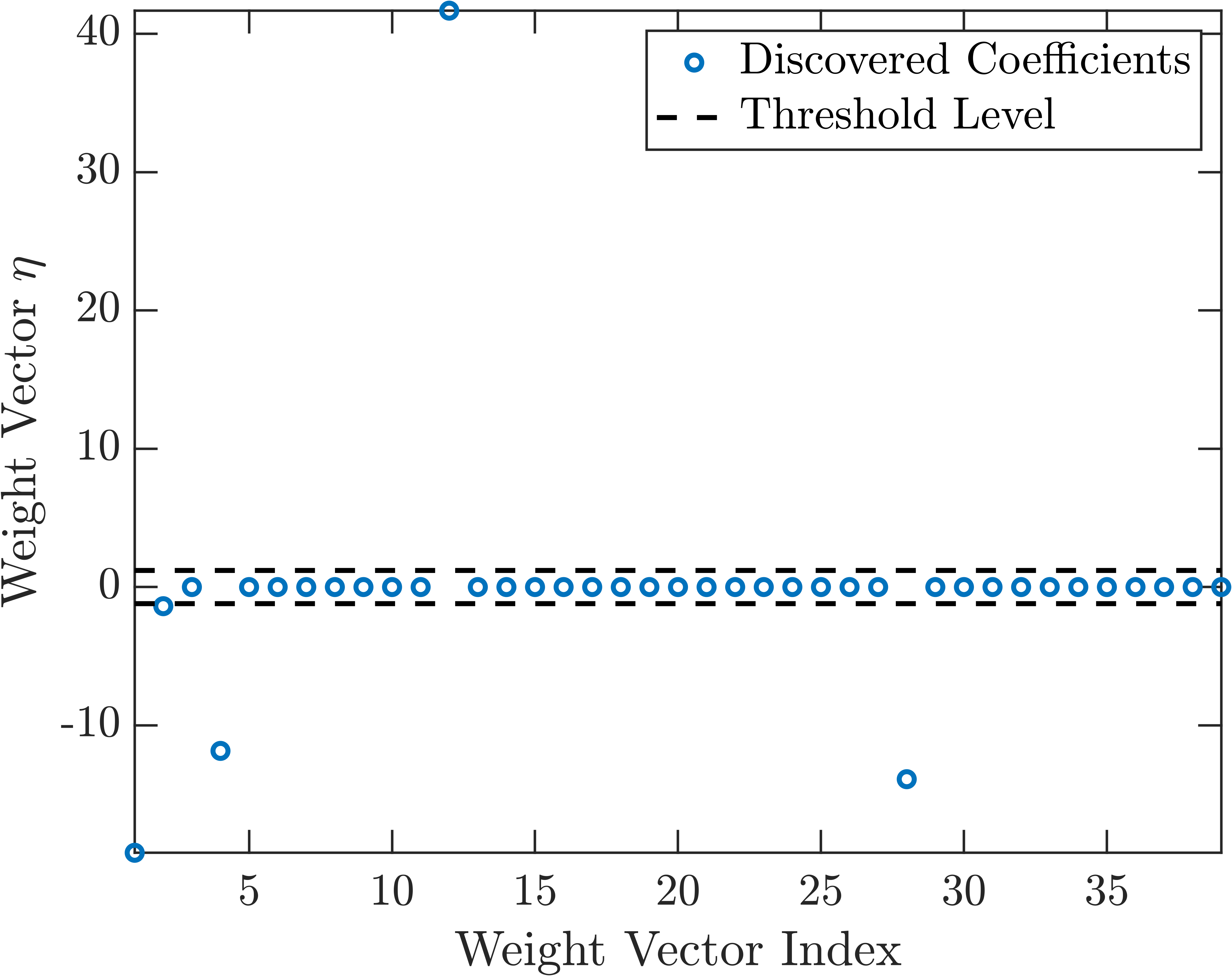}}
    \subfigure{\includegraphics[width=0.45\textwidth]{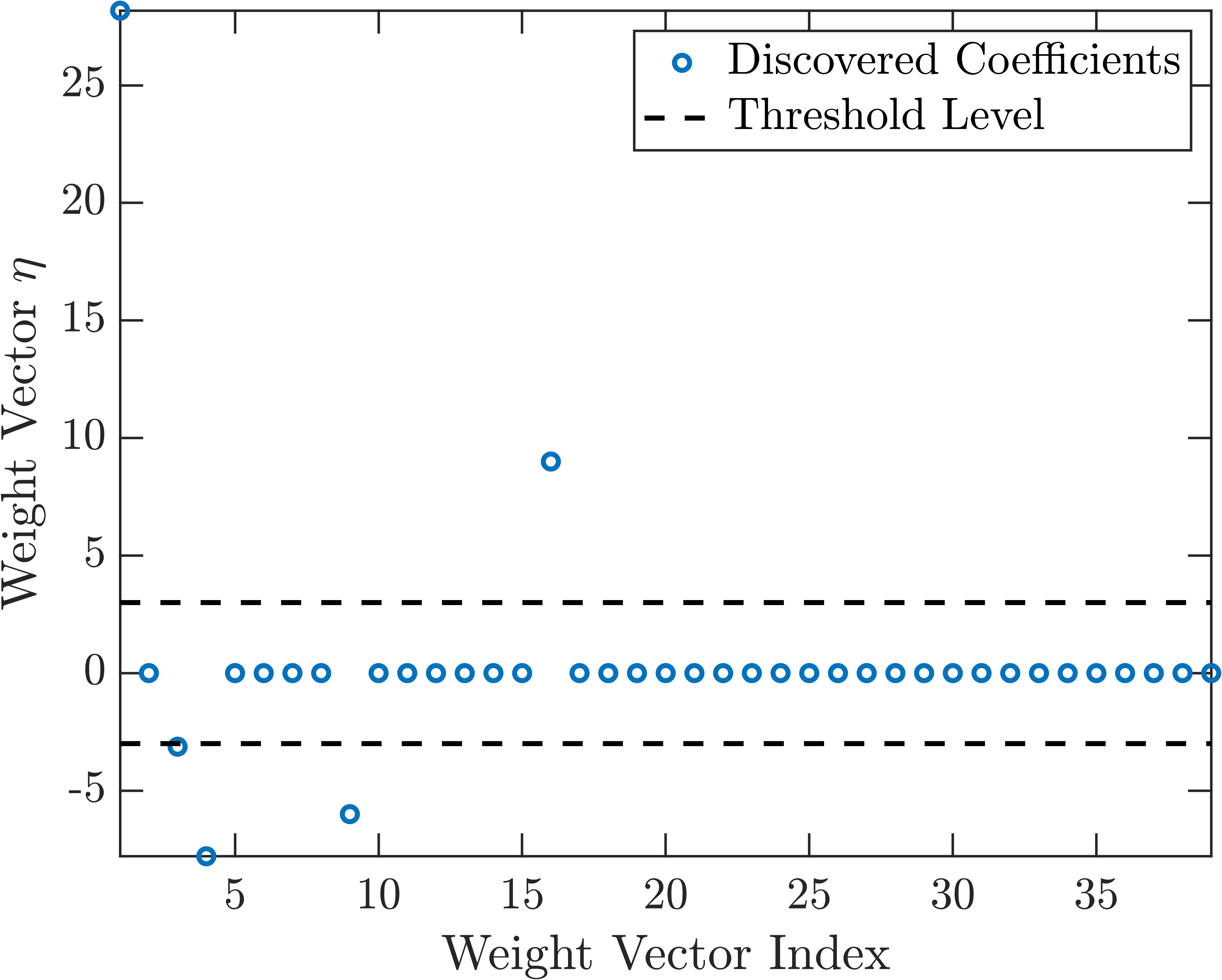}}
	\end{centering}
	\caption{Coefficients discovered during the sparse KdV Lax pair regression as given by Problem~\eqref{eq:KdVProb}. Coefficients shown in the left panel correspond to the degenerate ($\gamma=\kappa=0$) family of Lax pairs discussed in Theorem~\ref{thm:strongthm} while coefficients in the right panel correspond to the family of weak Lax pairs discussed in Theorem~\ref{thm:weakthm}. As discussed in the text, only four coefficients matter since $\eta_4$ in both panels do not contribute in either case to the equation compatible with the computed Lax pairs. In both cases, the loss with $r=0$ evaluates to $\mathcal{O}(10^{-15}).$}\label{fig:NewKdV}
\end{figure}
It is interesting to note that our numerics did not discover the known Lax pair. This is perhaps best explained by the fact that by searching for the sparsest possible Lax pairs, we missed the classical Lax pair with 5 terms.  Indeed, the unbiased loss for the four-term Lax pair is at least two orders of magnitude smaller than the loss when we searched for the best Lax pairs assuming the five-term hypothesis without sparsification.  

For this reason, we searched once again, sampling from the space of periodic functions on $x\in(0,2\pi)$ that do not necessarily vanish at the boundary. We did this to attempt to circumvent the repeated discovery of weak and degenerate Lax pairs. Only then did we discover the fully general six-term Lax pair of Theorem~\ref{thm:strongthm}. Furthermore, we were able to recover the well-known Lax pair when the search was restricted to operators of the form $\tilde{L}=\alpha u+\beta\partial_x^2$.
 Moreover, there exist alternative
formulations of the Lax pair compatibility, such as, 
e.g., that of~\cite{calogerolax}.

Therefore, increasing SILO's thoroughness (or potentially modifying
its setup) poses some intriguing challenges for
further efforts in this direction. Indeed, it does not escape us that there still remain several options for sampling function spaces and evaluating operators, and it is unclear how these may affect discoveries of Lax pairs. We simply report here how our choices explored thus far enable us, as computational users, to discover suitable Lax pairs for the problem of interest. We leave discussions about the mathematical implications of widening the space of test functions to accommodate the existence of weak Lax pairs to Section~\ref{section:conc}.

\section{The Cubic Nonlinear Schr\"odinger Equation}\label{section:NLS}
Our last example involves the focusing nonlinear Schr\"odinger (NLS) equation
\begin{equation}\label{eq:NLS}
i \partial_t \psi=-\partial_x^2 \psi+\frac{2}{p^2-1}|\psi|^2 \psi,
\end{equation}
where $0<p<1$ is a free parameter and $x\in\mathbb{R}$.  In principle, the parameter $p$ can be absorbed by rescaling the equation's variables, yet we maintain its presence so that the discussed Lax pairs are historically aligned with the literature~\cite{zakharov1972exact}.   We make the arbitrary choice of $p=1/\sqrt{2}$  when executing computations.

The NLS is a Hamiltonian system in the sense that  
$i\partial_t\psi=\delta H/\delta \psi^*$ with Hamiltonian
\begin{equation}\label{eq:NLSHam}
H=\int_{-\infty}^{\infty}\left(|\partial_x\psi|^2+\frac{1}{1-p^2}|\psi|^4\right)dx:=\int_{-\infty}^{\infty}h(\psi,\psi^*,\partial_x\psi,\partial_x\psi^*)dx.
\end{equation}
In this way, $\psi^*$ plays the role of the conjugate variable. Using the Hamiltonian, we build the Poisson bracket once more:
$$
\begin{aligned}
	\frac{\partial L}{\partial t} & =\frac{\partial L}{\partial \psi} \frac{\partial \psi}{\partial t}+\frac{\partial L}{\partial \psi^*} \frac{\partial \psi^*}{\partial t} \\
	& =\frac{\partial L}{\partial \psi}\left(-i \frac{\delta H}{\delta \psi^*}\right)+\frac{\partial L}{\partial \psi^*}\left(i \frac{\delta H}{\delta \psi}\right) \\
	& =i\left(\frac{\partial L}{\partial \psi^*} \frac{\delta H}{\delta \psi}-\frac{\partial L}{\partial \psi} \frac{\delta H}{\delta \psi^*}\right) \\
	& =i\{L, H\}=i[L,P]
\end{aligned}
$$
where the Fr{\'e}chet derivatives are given by
$$
\begin{aligned}
	& \frac{\delta H}{\delta \psi}=-\partial_x^2 \psi^*-\frac{2}{1-p^2} \left|\psi\right|^2\psi^*, \\
	& \frac{\delta H}{\delta \psi^*}=-\partial_x^2 \psi+\frac{2}{1-p^2} |\psi|^2\psi.
\end{aligned}
$$

To build the operator hypothesis, 
motivated in part by the well-known Lax pair of the NLS
due to Zakharov and Shabat~\cite{zakharov1972exact},
we assume that the operators in $L$ are at most linear
(in the field $\psi$) and the operators in $P$ are at most quadratic with constant and complex matrix coefficients in $2\times2$. That is, the hypothesis is
$$
\tilde{L}=\sigma_1 \partial_x+\sigma_2 \psi+\sigma_3 \psi^*
$$
$$
\begin{aligned}
	\tilde{P}=\sigma_4 \partial_x^2 & +\sigma_5 \partial_x +\sigma_6 \partial_x \psi+\sigma_ 7\partial_x \psi^*+\sigma_8 |\psi|^2+\sigma_9\psi^2+\sigma_{10}\psi^{*2}+\sigma_{11}\psi+\sigma_{12}\psi^*
\end{aligned}
$$
This amounts to the discovery of 12 matrices $\sigma_j\in\mathbb{C}^{2\times2}$, or 96 real parameters, in the optimization problem
\begin{equation}\label{eq:NLSProb}
	\min_{\eta\in\mathbb{R}^{N_{\eta}}}J[\eta]=\min _{\eta\in\mathbb{R}^{N_{\eta}}} (1-r)\left[ \frac{\int_{\mathbb{R}}\left|\{\tilde{L}, H\}-[\tilde{L}, \tilde{P}]\right|^2dx}{\int_{\mathbb{R}}\left|\{\tilde{L}, H\}\right|^2dx}\right]_\Omega+r\mathcal{R}^*(\eta)
	\end{equation}
where the evaluation over $\Omega$ is understood in the same sense of Equation~\eqref{eq:KdVProb} but over two component functions drawn from the NLS phase space. To sample from this phase space, we similarly use the overcomplete basis given by~\eqref{eq:PDEsample}, replacing the sine function with a complex exponential. We make the computational choice to compute the evaluation of the operators in the sense that they act on the column vector $w(x)=[u(x)\  u(x)^*]^{\intercal}.$ We, once again, emphasize that this is not the most general mathematical choice but has been
empirically found to suffice for our purposes.

We note that, due to Zakharov and Shabat, the Lax pair for the NLS equation is available in the form~\cite{zakharov1972exact}:
\begin{equation}\label{eq:ZakShab}
\begin{aligned}
	& L=i\left(\begin{array}{cc}
		1+p & 0 \\
		0 & 1-p
	\end{array}\right) \partial_x+\left(\begin{array}{cc}
		0 & \psi^* \\
		\psi & 0
	\end{array}\right) \\
	& P=-p\left(\begin{array}{cc}
		1 & 0 \\
		0 & 1
	\end{array}\right) \partial_x^2+\left(\begin{array}{cc}
		\frac{|\psi|^2}{1+p} & i \partial_x\psi^* \\
		-i \partial_x\psi & -\frac{|\psi|^2}{1-p}
	\end{array}\right).
\end{aligned}
\end{equation}
Therefore, we know that our operator hypothesis is wide enough to enclose the Zakharov-Shabat Lax pair. However, despite our best efforts, we were only able to achieve a loss on the order of $10^{-7}$. We believe that with 96 parameters, the loss landscape of this optimization problem becomes highly nonconvex, with numerous local minima preventing convergence to the global solution that coincides with the Zakharov-Shabat solution. To achieve a higher precision detection of integrability, we discard the matrices $\sigma_9-\sigma_{12}$ because, as can be seen, these matrices are not relevant to the resulting equation of motion. Therefore, the optimization problem has a more manageable 64 parameters to find. We defer the exploration of more sophisticated numerical optimization tools for higher dimensional computations to future studies.

With 64 parameters, we are able to minimize the loss to $\mathcal{O}(10^{-10})$. We perform another cross-validation study and visualize the results in Figure~\ref{fig:NLSCV}. Since NLS generalized Poisson brackets and commutators have 4 components, we display the real and imaginary parts of the first row, first column components for different unseen samples in Figure~\ref{fig:NLSCV}. We again see that our numerical optimization generalizes to unseen samples.
\begin{figure}[htbp]
	\begin{centering}		\subfigure{\includegraphics[width=0.45\textwidth]{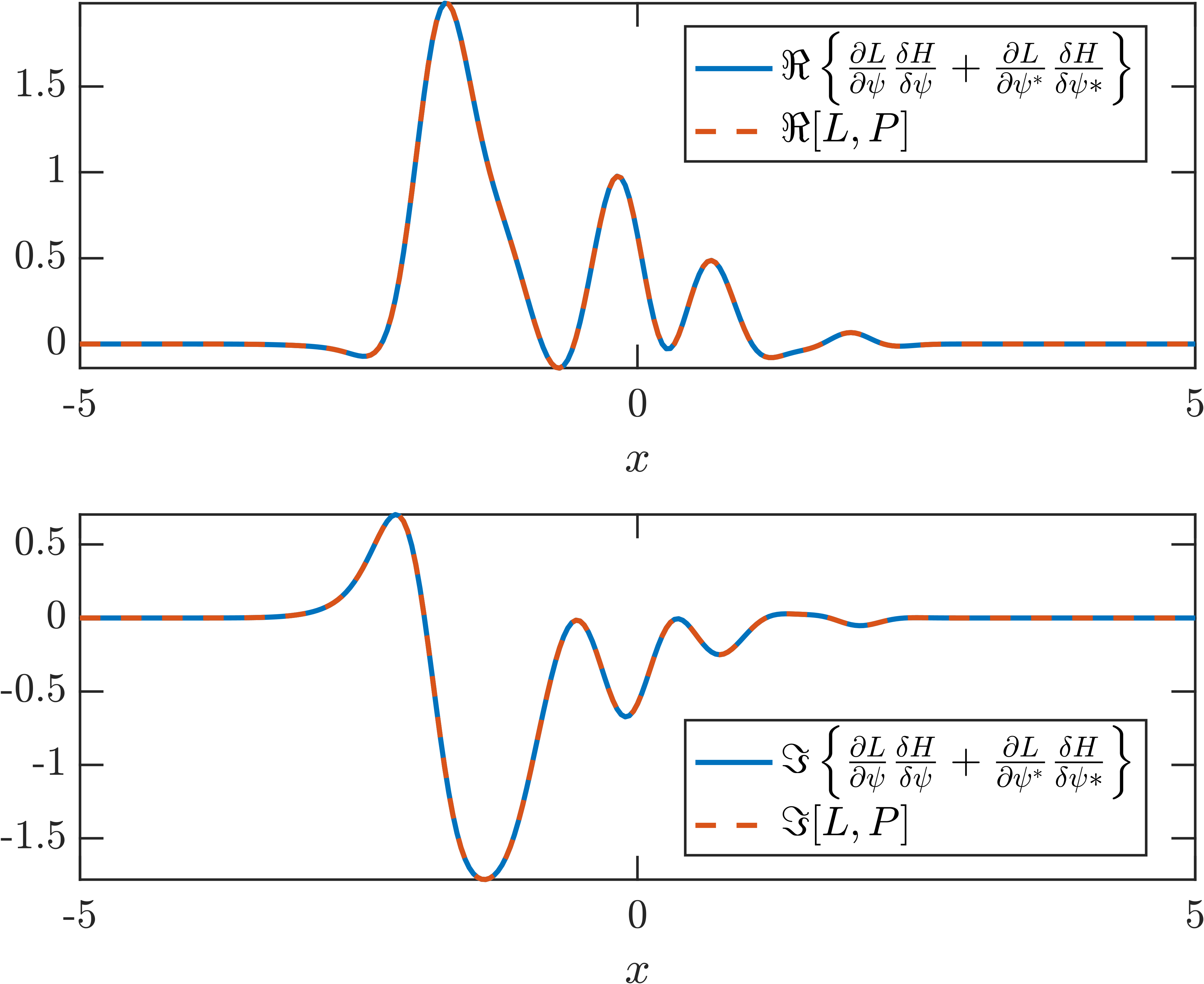}}
\subfigure{\includegraphics[width=0.45\textwidth]{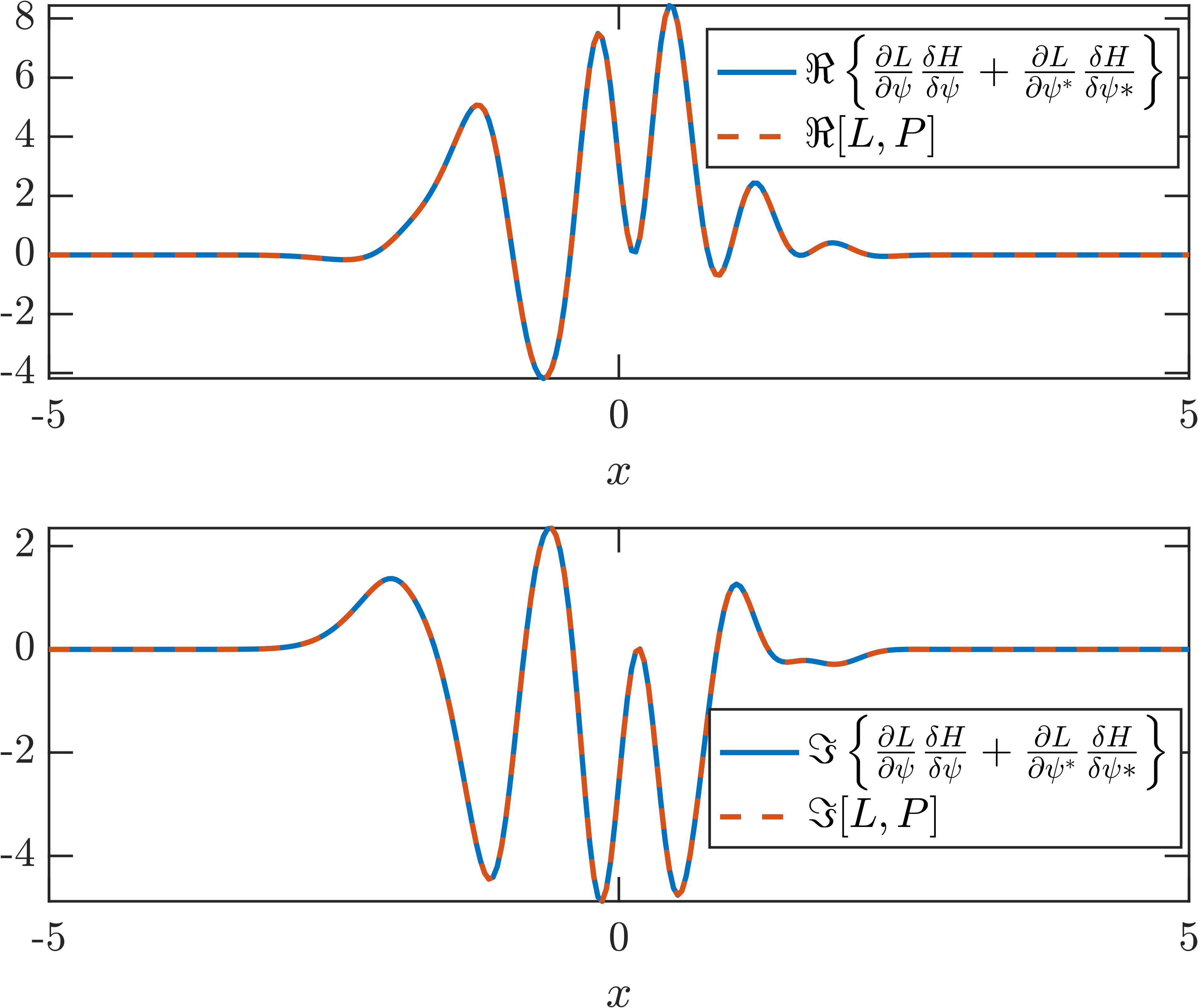}}
\subfigure{\includegraphics[width=0.45\textwidth]{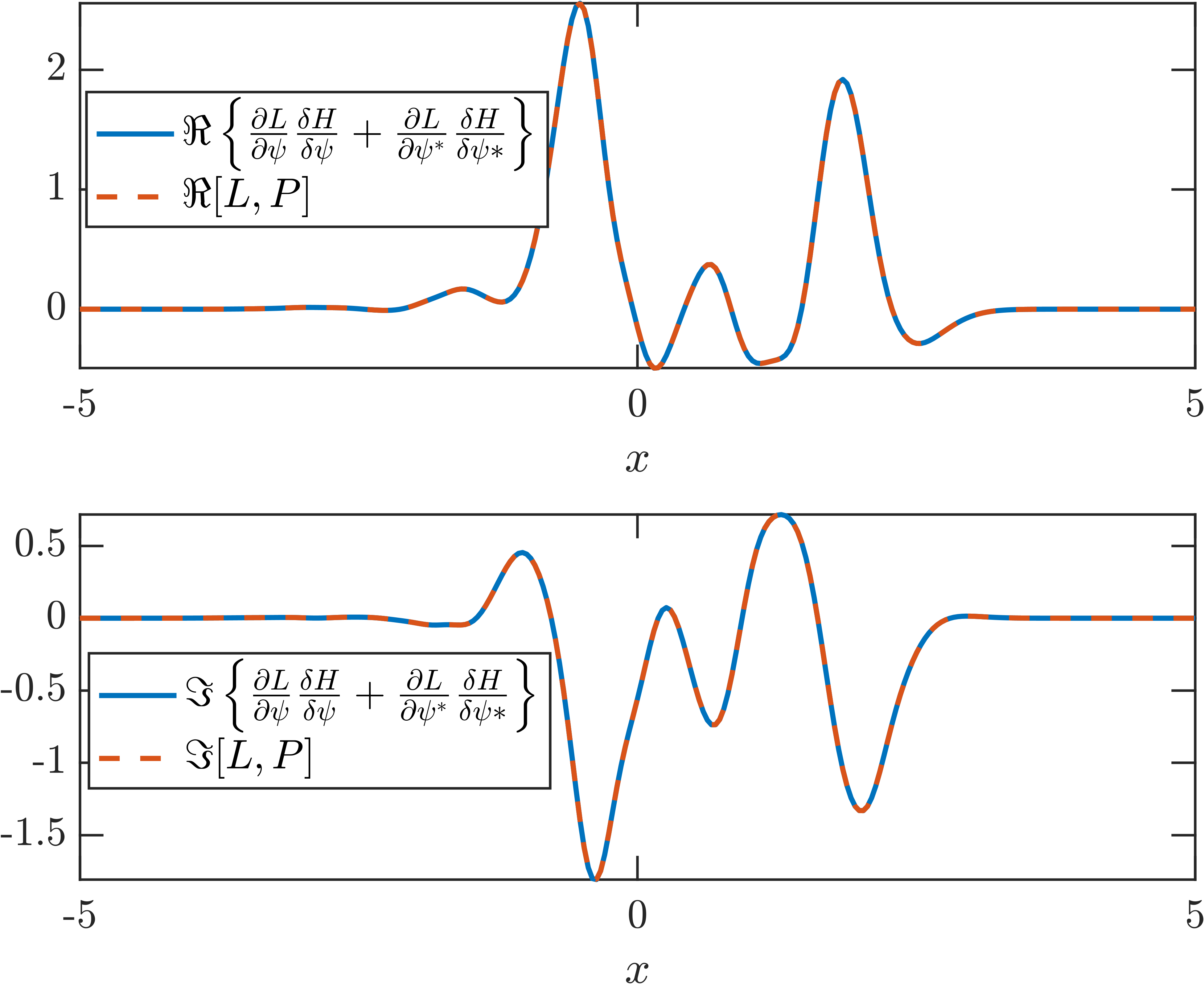}}
\subfigure{\includegraphics[width=0.45\textwidth]{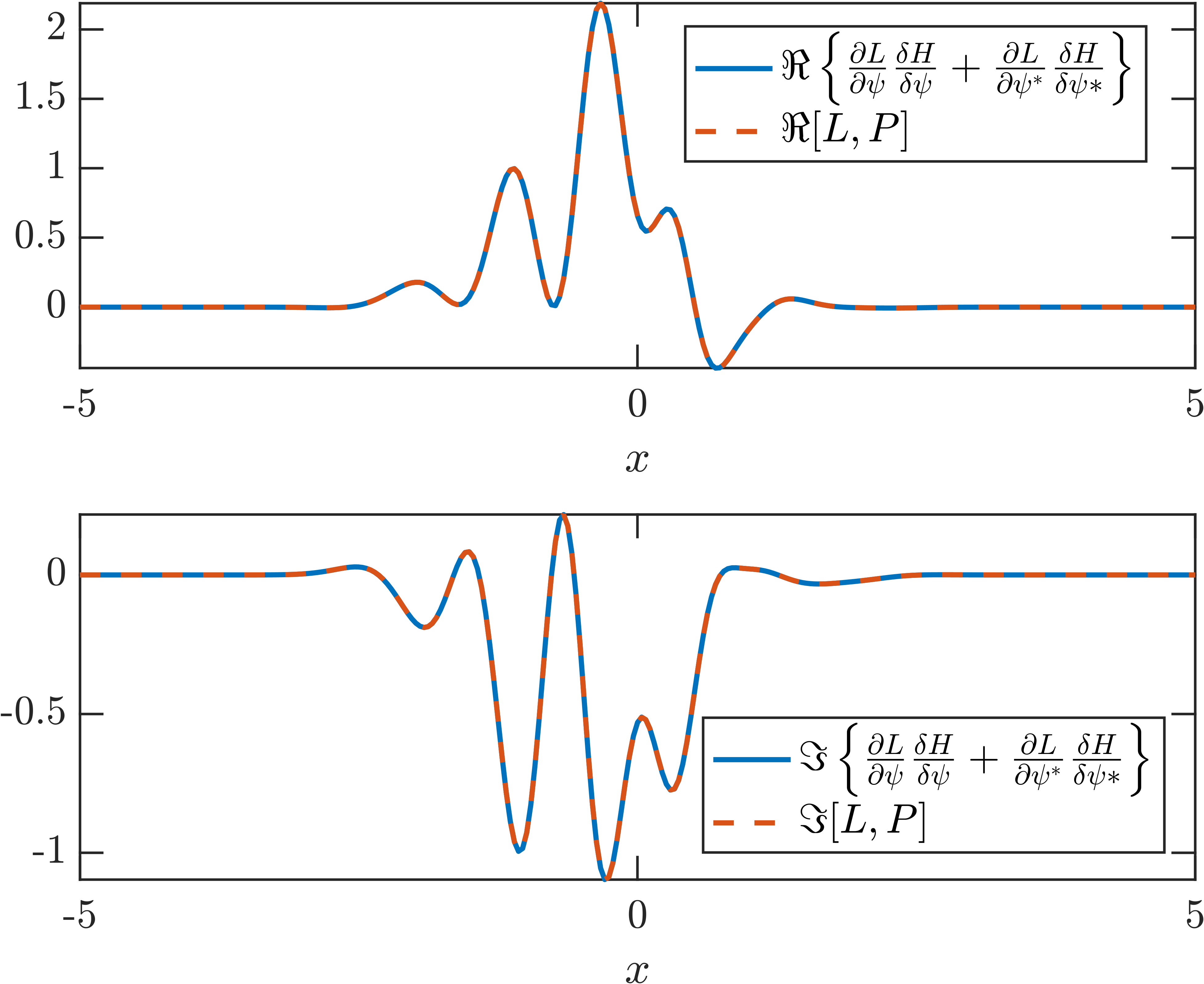}}
	\end{centering}
	\caption{A numerical result of solving Problem~\eqref{eq:NLSProb} without sparsification and with the 64 parameter hypothesis discussed in the text. Visualized here is a cross-validation study displaying the Poisson brackets and commutators evaluated at the optimal point $\eta^*$ and on four samples from the function space $\Omega$ that were unseen during training. For all four cases, the loss is on the order of $10^{-10}$. Displayed are   the real and imaginary parts of the first components of the vectors resulting from the evaluation of the Lax pairs.}\label{fig:NLSCV}
\end{figure}

To investigate again the sensitivity of the integrability detection, we introduce non-integrable Hamiltonians
$h_1=\frac{1}{3}\left|\psi\right|^6$ and $h_2=\frac{1}{2}\left|\partial_x^2 \psi\right|^2$. We solve Problem~\eqref{eq:NLSProb}, without sparsification, for Hamiltonain densities $h+\varepsilon_1 h_1+\varepsilon_2h_2$, where $h$ is defined in Equation~\eqref{eq:NLSHam} and $\varepsilon_1,\varepsilon_2\in[-.01,.01]$. We show in Figure~\ref{fig:NLSPerturb} that the loss has distinguished minima at the integrable points $\varepsilon_1=0$ (for fixed $\varepsilon_2=0$) and
$\varepsilon_2=0$ (for fixed $\varepsilon_1=0$). Once again, we conclude our methodology is precise enough to detect the integrability of the PDE under study.

\begin{figure}[htbp]
	\begin{centering}		\subfigure{\includegraphics[width=0.45\textwidth]{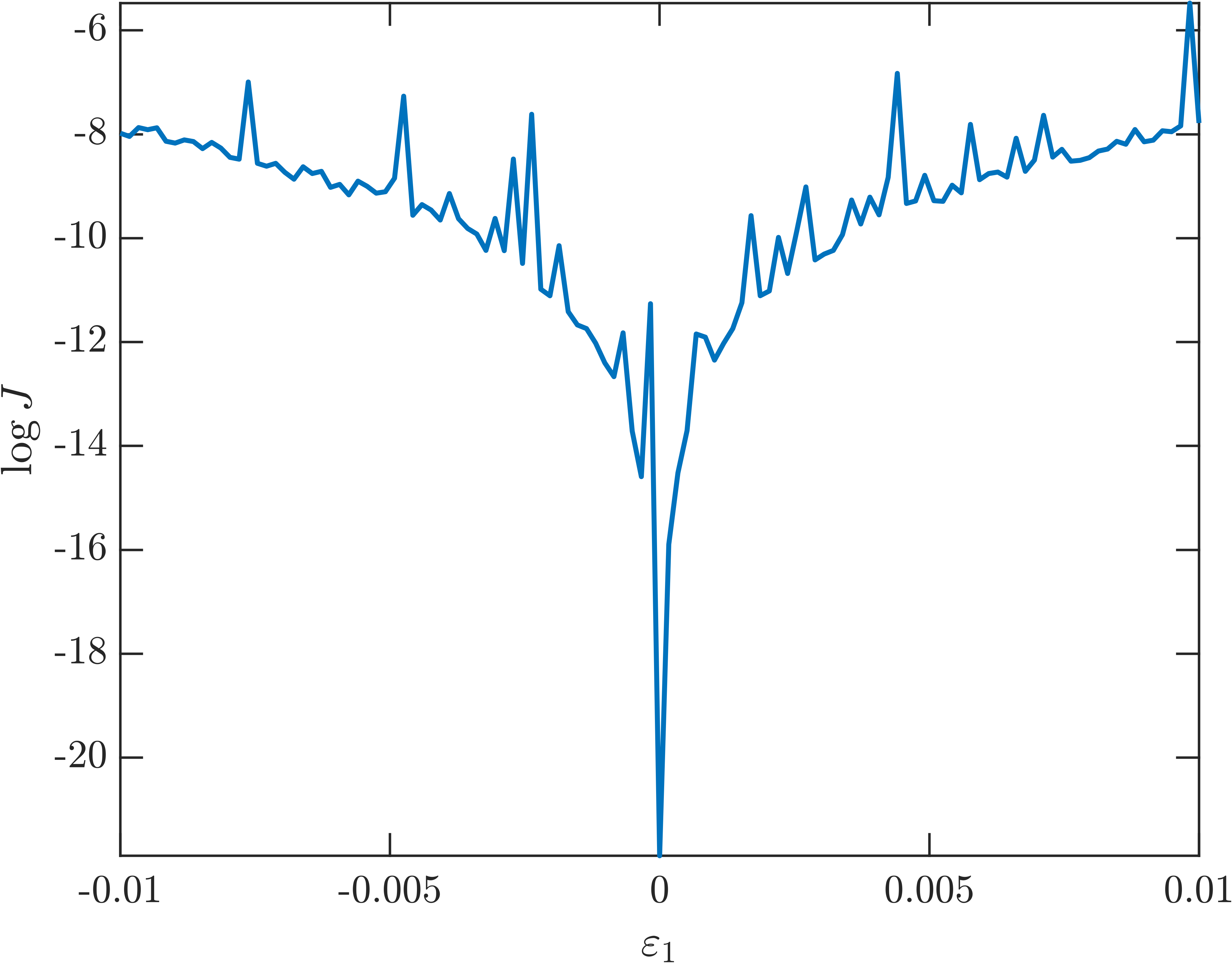}}
\subfigure{\includegraphics[width=0.45\textwidth]{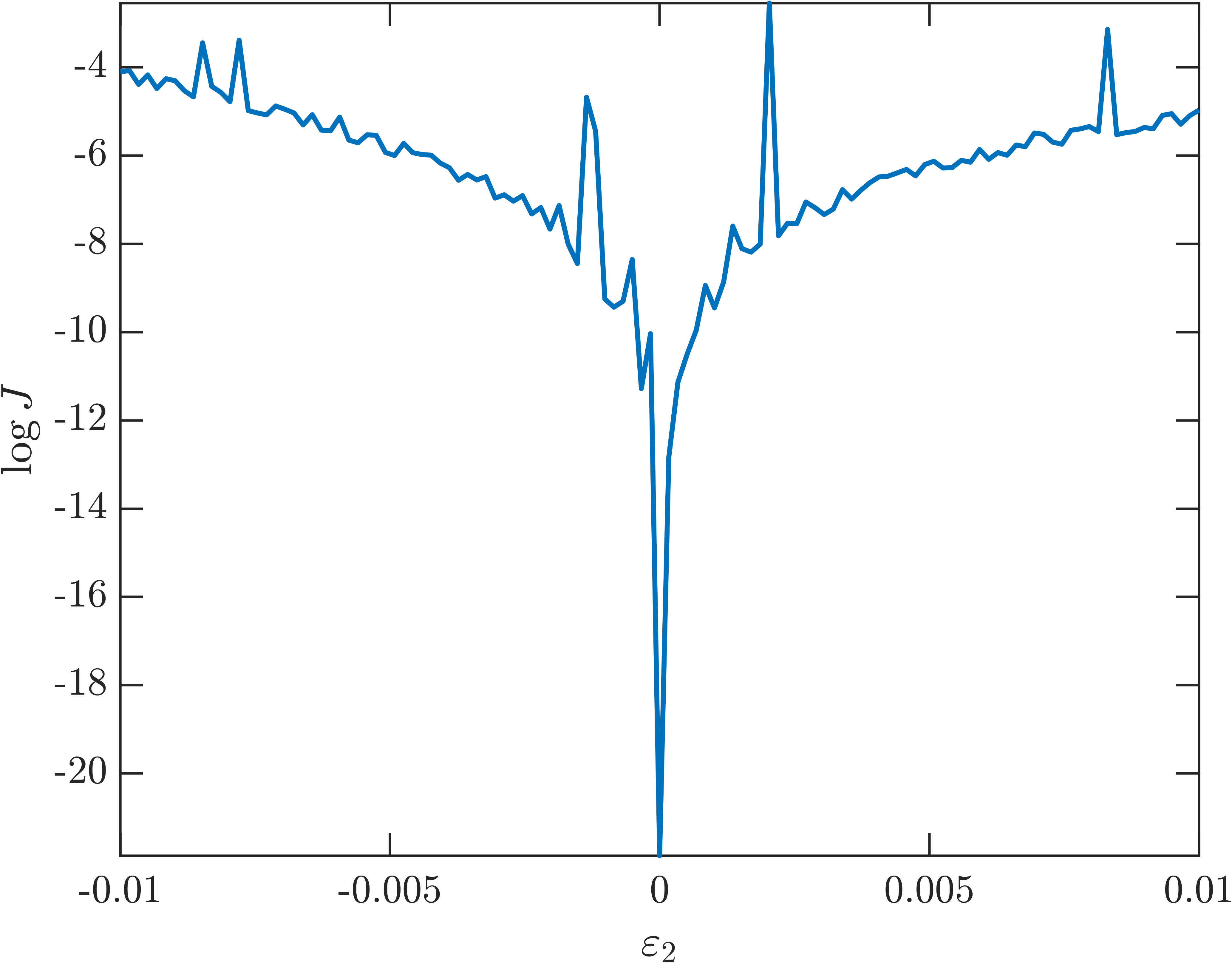}}
	\end{centering}
	\caption{A perturbation study on Problem~\eqref{eq:NLSProb}. We solve Problem~\eqref{eq:NLSProb}, without sparsification, for Hamiltonain densities $h+\varepsilon_1 h_1+\varepsilon_2h_2$, where $h$ is defined in Equation~\eqref{eq:NLSHam} with the Hamiltonians
$h_1=\frac{1}{3}\left|\psi\right|^6$ and $h_2=\frac{1}{2}\left|\partial_x^2 \psi\right|^2$. We see that the loss has a distinguished minimum at the integrable points $\varepsilon_1=0$ (for fixed $\varepsilon_2=0$) and
$\varepsilon_2=0$ (for fixed $\varepsilon_1=0$).}\label{fig:NLSPerturb}
\end{figure}

Just as in previous sections, we aim to interpret our numerical results by introducing sparsification. We omit the numerical values and simply report the mathematical findings that we infer from our computation. These computations typically lead to an unbiased loss of $\mathcal{O}(10^{-15})$. 
That is, we report that with the same sparsification strategies employed earlier, we once again discover a new family of weak Lax pairs. However, this time SILO led us to a Lax pair that simultaneously produces a coupled system of linear and nonlinear Schr\"odinger equations.
\begin{theorem}[Existence of Weak Lax Pairs for the Linear and Nonlinear Schr\"odinger Equations]\label{thm:weakNLS1}
        Let $w(x)=[u(x)\  v(x)]^{\intercal}$ where each component is a complex function in the Schwartz space $\mathcal{S}(\mathbb{R}).$ Then for every $\psi,\varphi\in H^1([0,T];H^3(\mathbb{R}))$, the Lax pair
            $$
L=\left(\begin{array}{cc}
a_1\varphi^* & a_2\psi \\
a_3\psi^* & a_4\varphi
\end{array}\right), \quad P=\left(\begin{array}{cc}
b_1\partial_x^2-b_2|\psi|^2 & 0 \\
0 & -b_1\partial_x^2+b_2|\psi|^2
\end{array}\right)
$$
produces the decoupled pair of equations
\begin{align*}
i \varphi_t&=b_1\varphi_{x x},\\
i \psi_t&=b_1 \psi_{x x}-2b_2|\psi|^2 \psi.
\end{align*}
almost everywhere together with their complex conjugates.

\begin{proof}
    By  direct analogy with the KdV case, an evaluation on $w\in\mathcal{S}(\mathbb{R})^2$ along with integration by parts and function space embeddings wherever necessary reproduces the desired linear and nonlinear Schr\"odinger equations
    (almost everywhere).
\end{proof}
\end{theorem}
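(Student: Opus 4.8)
The plan is to argue by direct computation, in close parallel with the proofs of Theorems~\ref{thm:strongthm} and~\ref{thm:weakthm}, taking full advantage of the block-diagonal form of $P$. Write $P=\operatorname{diag}(P_+,-P_+)$ with $P_+=b_1\partial_x^2-b_2|\psi|^2$, and observe that $L$ is a matrix of pure multiplication operators, with $L_{11}=a_1\varphi^*$, $L_{12}=a_2\psi$, $L_{21}=a_3\psi^*$, $L_{22}=a_4\varphi$. Then $\partial_tL$ is read off entrywise, while $[L,P]$ separates into two qualitatively different pieces: the diagonal entries $a_1[\varphi^*,P_+]$ and $-a_4[\varphi,P_+]$, in which the $-b_2|\psi|^2$ summand commutes with the multiplication operator so that only $[\,\cdot\,,b_1\partial_x^2]$ survives (yielding the first-order operator $-b_1\mu_{xx}-2b_1\mu_x\partial_x$ for the relevant field $\mu$); and the off-diagonal entries, which, because of the relative sign in $P$, are of anticommutator type, $-a_2(\psi P_+ +P_+\psi)$ and $a_3(\psi^* P_+ +P_+\psi^*)$, each equal to the second-order operator $2b_1\psi\partial_x^2+2b_1\psi_x\partial_x+\bigl(b_1\psi_{xx}-2b_2|\psi|^2\psi\bigr)$ together with its conjugate.

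Next I would impose the weak Lax equation in the distributional sense used for the KdV weak pair, namely $\int_{\mathbb{R}}\bigl(\partial_tL-i[L,P]\bigr)w\,dx=0$ for all $w=[u\ v]^{\intercal}\in\mathcal{S}(\mathbb{R})^{2}$, the factor $i$ being dictated by the NLS Poisson-bracket identity $\partial_tL=i\{L,H\}=i[L,P]$ established in this section. The heart of the argument is then the integration by parts: since $u,v\in\mathcal{S}(\mathbb{R})$ and $\psi,\varphi\in H^3(\mathbb{R})$, all boundary terms vanish, and one checks that the second- and first-order pieces of the off-diagonal contributions cancel, because $2b_1\int\psi\,v_{xx}\,dx=2b_1\int\psi_{xx}v\,dx$ exactly offsets $2b_1\int\psi_x v_x\,dx=-2b_1\int\psi_{xx}v\,dx$, leaving only the multiplicative residual $b_1\psi_{xx}-2b_2|\psi|^2\psi$; similarly, the diagonal first-order term $-2b_1\varphi^*_x\partial_x$ contributes, after one integration by parts, $+2b_1\int\varphi^*_{xx}u\,dx$, which combines with the $-b_1\int\varphi^*_{xx}u\,dx$ already present to give the multiplicative residual $b_1\varphi^*_{xx}$. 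Collecting everything, the first row of the identity reduces to $a_1\int\bigl(\varphi^*_t-ib_1\varphi^*_{xx}\bigr)u\,dx+a_2\int\bigl(\psi_t+i(b_1\psi_{xx}-2b_2|\psi|^2\psi)\bigr)v\,dx=0$, and the second row is its complex-conjugate counterpart.

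Since $u$ and $v$ are arbitrary and independent, the fundamental lemma of the calculus of variations applied componentwise yields, upon taking complex conjugates where needed, $i\varphi_t=b_1\varphi_{xx}$ and $i\psi_t=b_1\psi_{xx}-2b_2|\psi|^2\psi$ from the first row, and the conjugate equations from the second; one then checks that this over-determined system is self-consistent. As in the proof of Theorem~\ref{thm:weakthm}, the continuous embeddings $H^1([0,T])\hookrightarrow C([0,T])$ and $H^3(\mathbb{R})\hookrightarrow C^2(\mathbb{R})$ upgrade the almost-everywhere conclusion to the classical statement. The main obstacle --- really the only step that is not mechanical --- is recognizing and verifying the cancellation described above: the raw commutator $[L,P]$ is a bona fide matrix of differential operators, and it is only after pairing against the Schwartz test vector and integrating by parts that the higher-derivative terms disappear, converting what at first sight is an operator identity linking $\psi,\varphi$ to the test vector into a closed, \emph{decoupled} pair of evolution equations. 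A secondary bookkeeping hazard is keeping track of the factors of $i$ and the complex conjugations so that the two rows produce a mutually consistent conjugate pair rather than contradictory equations.
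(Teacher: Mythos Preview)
Your proposal is correct and follows precisely the route the paper intends: evaluate $\partial_tL-i[L,P]$ on a Schwartz test vector, integrate by parts to kill the first- and second-order derivative terms, and then invoke the embeddings $H^1([0,T])\hookrightarrow C([0,T])$, $H^3(\mathbb{R})\hookrightarrow C^2(\mathbb{R})$---the paper's own proof is nothing more than the sentence ``by direct analogy with the KdV case,'' and you have simply carried that analogy out in full, with the helpful organizational device of splitting $[L,P]$ into diagonal commutator and off-diagonal anticommutator blocks.
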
\

It is interesting to note that this Lax pair not only produces the NLS, but its linear counterpart, too. The free space linear Schr\"odinger equation is, after all, exactly solvable. We note that in the classical case, the Zakharov-Shabat Lax pair produces the NLS and its complex conjugate counterpart (i.e., it does not
reproduce a linear Schr\"odinger model. 

Our numerical findings of sparse Lax pairs helped us in the discovery of the Lax pair in Theorem~\ref{thm:weakNLS1}. During the manual process of generalizing our numerical results to a mathematical statement about Lax pairs, we recognized a degenerate case where the spectrum of $L$ is zero for all time.  This is indeed the sparsest possible weak Lax pair that produces the NLS equation as we show now.
\begin{theorem}[Sparsest Weak Lax Pair for the NLS Equation]\label{thm:weakNLS2}
    Let $w(x)=[u(x)\  v(x)]^{\intercal}$ where each component is a function in the Schwartz space $\mathcal{S}(\mathbb{R}).$ Then for every $\psi\in H^2([0,T];H^3(\mathbb{R})),$ the Lax pair 
    $$
L=\left(\begin{array}{cc}
0 & 0 \\
\alpha\psi^{*} & 0
\end{array}\right), \quad P=\left(\begin{array}{cc}
\beta \partial_x^2+\gamma|\psi|^2 & 0 \\
0 & 0
\end{array}\right)
$$
is the sparsest weak Lax pair that reproduces the NLS equation
$$
i \psi_t=-\beta  \psi_{x x}-\gamma|\psi|^2 \psi
$$
almost everywhere.
\begin{proof}
    In this case, verifying the Lax pair is a short calculation, so we provide the details here. Observe that 
    $$
    \int_{\mathbb{R}}[L,P]wdx=\binom{0}{\int_{\mathbb{R}}\left(\alpha \beta \psi^*u_{xx}+\alpha \gamma|\psi|^2 \psi^* u\right)dx}
    $$
    while
    $$
    \int_{\mathbb{R}}\partial_tLwdx=\binom{0}{\alpha\int_{\mathbb{R}}\psi^*_tudx}.
    $$
    Setting $\int_{\mathbb{R}}\left(\partial_tL-i[L,P]\right)wdx=0$, dividing through by $\alpha,$ and integrating by parts twice, we see that the second component is now
    $$
    \left\langle-i \psi_t^*-\beta  \psi_{x x}^*- \gamma |\psi|^2 \psi^*, u\right\rangle_{L^2(\mathbb{R})}=0,
    $$
which is the desired weak form of the NLS equation. The fact that this is the sparsest Lax pair naturally follows by contradiction.
\end{proof}
\end{theorem}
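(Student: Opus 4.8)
The plan is to follow the two-part structure already established for Theorems~\ref{thm:weakthm} and~\ref{thm:weakNLS1}: a direct \emph{verification} that the displayed pair reproduces the cubic NLS equation in the weak, and hence almost-everywhere, sense, followed by a short \emph{minimality} argument. For the verification I would compute $[L,P]w$ and $\partial_t L\,w$ for an arbitrary test vector $w=[u\ v]^{\intercal}\in\mathcal{S}(\mathbb{R})^{2}$. Since $L$ is supported in its $(2,1)$ entry and $P$ in its $(1,1)$ entry, the products collapse: $PL=0$, so $[L,P]w=LPw=\bigl[\,0\,,\ \alpha\beta\psi^{*}u_{xx}+\alpha\gamma|\psi|^{2}\psi^{*}u\,\bigr]^{\intercal}$, while $\partial_t L\,w=\bigl[\,0\,,\ \alpha\psi^{*}_t u\,\bigr]^{\intercal}$. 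The first component of the weak Lax equation $\int_{\mathbb R}(\partial_t L-i[L,P])w\,dx=0$ is then vacuous, and after dividing by $\alpha$ its second component is a single scalar identity tested against the arbitrary function $u$.

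The next step is to integrate by parts twice in $x$ to move both spatial derivatives off $u$ and onto $\psi^{*}$; the boundary terms vanish since $u$ is Schwartz and $\psi\in H^{3}(\mathbb R)$ keeps the remaining factors bounded. This produces $\langle\, -i\psi^{*}_t-\beta\psi^{*}_{xx}-\gamma|\psi|^{2}\psi^{*}\,,\ u\,\rangle_{L^{2}(\mathbb R)}=0$ for every $u\in\mathcal{S}(\mathbb R)$. Because the bracketed function lies in $L^{2}(\mathbb R)$ for a.e.\ $t$ --- using $H^{3}\hookrightarrow L^{\infty}$ for the cubic term and the $H^{2}$-in-time regularity for $\psi_t$ --- the fundamental lemma of the calculus of variations forces it to vanish almost everywhere, and taking complex conjugates recovers the cubic NLS equation in the stated form. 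This half is entirely routine and I do not anticipate any difficulty.

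For minimality I would argue by a short exhaustion \emph{within the operator library} used for NLS (matrix coefficients in $\mathbb{C}^{2\times2}$, with $L$ at most linear in $\psi$ and $P$ at most quadratic). First, $P\not\equiv 0$: otherwise $[L,P]=0$ and $\partial_t L=0$, which forces $\psi$ to be stationary, so $P$ contributes at least one term. Second, $L$ must contain a $\psi$- or $\psi^{*}$-dependent entry, since the term $i\psi_t$ of NLS can only arise from $\partial_t L$ (a pure $\partial_x$/constant $L$ has $\partial_t L=0$), so $L$ contributes at least one term, giving at least two in total. To exclude exactly two, note that the only remaining configuration is ``$L$ one field term, $P$ one term,'' and then $[L,P]$ is homogeneous of field-degree $1+\deg P$ while $\partial_t L$ is homogeneous of field-degree $1$ with no spatial derivative: if $\deg P=0$ one gets linear (Schr\"odinger) dynamics with no cubic term, while if $\deg P=2$ --- the only way a $|\psi|^{2}\psi$ term can appear --- the commutator carries no $\psi_{xx}$ term and neither does $\partial_t L$, so dispersion is absent. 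In every case a two-term pair fails to reproduce \emph{both} the $\psi_{xx}$ and the $|\psi|^{2}\psi$ contributions, so at least three terms are required; the displayed pair realises three, in the degenerate configuration $\operatorname{spec}(L)\equiv\{0\}$, and is therefore sparsest.

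I expect the minimality step to be the only real obstacle --- not because any individual case is hard, but because it requires committing to a precise notion of ``number of terms'' (active library coefficients) and being careful that the weak formulation, which licenses integration by parts, does not let a genuinely two-term pair present itself as a denser one. The clean way to avoid that pitfall is to compare the homogeneous field-degrees and the spatial-derivative orders of the terms appearing in $\partial_t L$ and in $[L,P]$ \emph{before} performing any integration by parts, since those invariants are exactly what the NLS equation constrains and what a bilinear commutator of one-term operators cannot simultaneously match.
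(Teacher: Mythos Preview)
Your verification step is essentially identical to the paper's own proof: compute $[L,P]w$ and $\partial_tL\,w$, observe that only the second component survives, divide by $\alpha$, integrate by parts twice, and read off the weak NLS equation. For the minimality claim the paper says only that it ``naturally follows by contradiction,'' whereas you supply an explicit exhaustion over the library (ruling out $P\equiv0$, then $L$ with no field term, then the two-term configurations by comparing field-degree and derivative order); your argument is strictly more detailed than the paper's but in the same spirit, and is correct once one notes that the $\deg P=1$ case is also covered by your degree-counting setup.
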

In a certain sense, the results of SILO in the present
example provided us with only a trivial weak Lax pair that produces the NLS equation. One can argue that this is perhaps not (sufficiently) mathematically useful, but with respect to our computational framework, this is the best we could hope for. It is therefore not surprising that our numerics, aimed at finding the sparsest possible Lax pairs, correctly discover the Lax pairs shown in Theorems~\ref{thm:weakNLS1} and~\ref{thm:weakNLS2}, all of which, under an integral, are sparser than the Zakharov-Shabat one.

\section{Conclusion and Outlook}\label{section:conc}
In many ways, we (hope that we) have demonstrated that SILO is a step in the right direction for automated discovery of integrability in Hamiltonian dynamical systems. Broadly speaking, SILO achieves two major goals; high-precision detection of integrability and the discovery of interpretable, sparse Lax pairs (potentially weak
ones through a relevant definition
that was made precise earlier in the text).

Despite its successes, SILO is certainly not without flaws. SILO consistently finds the six-term lower order Lax pair of Theorem~\ref{thm:strongthm} instead of the typical five-term one reported in the literature. It was only with some guidance that we could reproduce the well-known pair. Additionally, despite attempts to filter out the discovery of weak Lax pairs by sampling from periodic function spaces, SILO, in its current form, did
not succeed in identifying a strong Lax pair for the NLS equation. This warrants future study on how to tune our framework toward a more reliable discovery of $\textit{all}$ Lax pairs that may exist. Additionally, both the function
spaces within which functions are selected and the very choice of
examining the action of the operator equation on the solution
(rather than on arbitrary functions) are aspects worth revisiting
and improving, as may be possible, in future efforts. Indeed, 
accounting in some suitably generalized
way for the operator nature of the Lax equation is a significant
direction that can potentially be targeted for future efforts.

There are numerous additional veins for improvement and continued exploration. One such study is to construct a wide operator hypothesis to attempt to thoroughly investigate the integrability of the Henon-Heiles system. Henon-Heiles indeed has two other known integrable points, yet the Lax pairs known to exist at these points are functionally different, and not just by the placement of the $p$'s and $q$'s, than the one we investigated in this work. Searching over such wide hypotheses is a computational challenge that should be addressed because this is a clear path to the potential discovery of unknown points of integrability over the $(A,B,\varepsilon)$ landscape. There are also opportunities to improve our computational framework to support large-scale nonconvex numerical optimization methods to attack large lattice-based systems such as Ablowitz-Ladik and Toda type systems. There, potentially, further complications
may arise, such as, e.g., the non-standard Poisson brackets
of the Ablowitz-Ladik setting~\cite{kevrekidis2009dnls}.

We also leave behind interesting mathematical questions in the context of the Lax pairs discovered in the PDE setting. Are weak Lax pairs of potential mathematical use, or are they simply weeds in the landscape of viable operator pairs? 
Does the strong Lax pair of Theorem~\ref{thm:strongthm} have an impact on the inverse scattering theory for the KdV equation? 
[A concern in the latter vein is the non-Hermitian nature
of the resulting $L$ operator, especially since
the isospectrality of this operator and the nature
of its eigenvalues are crucial to the entire inverse scattering
machinery~\cite{Ablowitz2011a}.]
Is this strong pair an example of a ``fake" Lax pair~\cite{butler2015two}, that is, one which reproduces the Hamiltonian system yet fails to reproduce the correct conserved quantities? Or does this Lax pair imply Liouville integrability, that is, does it satisfy the geometric conditions of Gui-zhang~\cite{gui1989liouville}?

We believe that sparse symbolic regression techniques will continue to lead to breakthroughs in automated discovery within mathematical physics. Indeed, our intention in the
present work is to plant the seed towards further potential
classical and quantum, low-dimensional and field theoretic
data-driven approaches. Beyond integrability, there is also the question of (maximal) super-integrability on an $N$-degree of freedom Hamiltonian system with $m > N$ integrals up to $m=2N-1$ for maximal super-integrability. Perhaps the best known model, both in the classical and quantum setting, is the Calogero-Moser model \cite{calogero}. Although much is known in the classical setting, in the quantum case, superintegrability is an open question for $N>2$. 

Another excellent testbed for adapting the essence of the SILO framework is the realm of exactly solvable models of statistical and quantum field theories. In fact, in that context, one could think of detecting exact solvability by exploiting the Yang-Baxter relation~\cite{korepin1997quantum}. The 1D and 2D Ising, eight-vertex, quantum Heisenberg, Lieb-Liniger, and Hubbard models are all examples of field theories that are consistent with regard to the Yang-Baxter equation. We ask: Could the sparse identification of Yang-Baxter scattering matrices be the key to automated discovery of integrable quantum field theories as well? 

\section{Acknowledgments}
This material is based upon work supported by the U.S. National Science Foundation under the award PHY-2316622 (JA), PHY-2110030,
PHY-2408988 and DMS-2204702 (P.G.K.) and DMS-2502900 (WZ) and by the Air Force Office of Scientific Research (AFOSR) under Grant No. FA9550-25-1-0079 (WZ). JA and PGK gratefully acknowledge the Society for Industrial and Applied Mathematics (SIAM) postdoctoral support program, established by Martin Golubitsky and Barbara Keyfitz, for helping to make this work possible. JA also acknowledges helpful discussions with Roy Goodman, who suggested studying the Henon-Heiles system, Peter Miller, who pointed out the possibility of fake Lax pairs, and Sebastien Motsch, who suggested we study the effects of using $\mathcal{R}_p$ for $0<p<1$ in sparsification. The authors also acknowledge Nicholas Bagley for his initial efforts on this project.
\bibliography{bibliography}{} 
\bibliographystyle{siam}

\end{document}